\documentclass[11pt]{article}
\usepackage{fullpage}
\usepackage{times}
\usepackage{amsmath,amsfonts,amssymb,amscd}
\usepackage{amsthm}
\usepackage{mymacros} 


\newtheorem{definition}{Definition}[section]
\newtheorem{theorem}[definition]{Theorem}

\newtheorem{lemma}[definition]{Lemma}
\newtheorem{observation}[definition]{Observation}

\newtheorem{corollary}[definition]{Corollary}

\DeclareMathOperator{\val}{value}
\DeclareMathOperator{\valu}{val}
\DeclareMathOperator{\totval}{tval}
\DeclareMathOperator{\tv}{tval}

\DeclareMathOperator{\lmove}{lem}

\begin{document}

\title{Certifiably Pseudorandom Financial Derivatives\thanks{An extended abstract of this paper appeared in EC'11.
}
}

\author{
David Zuckerman 
\\
Department of Computer Science\\
University of Texas at Austin\\
2317 Speedway, Stop D9500\\
Austin, TX  78712\\
{\tt diz@cs.utexas.edu}
}

\maketitle
\begin{abstract}
Arora, Barak, Brunnermeier, and Ge \cite{abbg} showed that taking computational
complexity into account, a dishonest seller could strategically place lemons in financial derivatives to make them substantially less valuable to buyers.
We show that if the seller is required to construct derivatives of a certain form, then this phenomenon
disappears.
In particular, we define and construct \emph{pseudorandom derivative families}, for which lemon placement only slightly affects the values of the derivatives.
Our constructions use 
expander graphs.

We study our derivatives in a
more general setting than Arora et al.
In particular, we analyze arbitrary tranches of the common 
collateralized debt obligations (CDOs) when the underlying assets can have significant dependencies.
\end{abstract}



{\bf Keywords:} pseudorandom, finance, derivative, expander graph.

\newcommand{\assetsize}{n} 
\newcommand{\derivsize}{m} 
\newcommand{\leftsize}{n}
\newcommand{\rightsize}{m}
\newcommand{\lemonsize}{\ell} 
\newcommand{\badsize}{k} 
\newcommand{\ksize}{\ell} 
\newcommand{\kmax}{\ksize_{max}}
\newcommand{\lmax}{{\ell_{max}}}
\newcommand{\degree}{d} 
\newcommand{\degreer}{r}
\newcommand{\leftdeg}{d}
\newcommand{\rightdeg}{r}
\newcommand{\assets}{A}
\newcommand{\derivs}{B}
\newcommand{\adegree}{\leftdeg}
\newcommand{\derivdeg}{\rightdeg}
\newcommand{\expansion}{\gamma} 
\newcommand{\badthresh}{t}
\newcommand{\rdegree}{d_r}
\newcommand{\globe}{Z}

\newcommand{\valab}{\val_{[a,b]}}
\newcommand{\valuab}{\valu_{[a,b]}}
\newcommand{\lmoveab}{\lmove_{[a,b]}}

\section{Introduction}
\label{sec:intro}

Financial derivatives play a major role in our financial system, as became all too apparent in the 2008 financial crisis.
A derivative is a financial product whose value is a function of one or more underlying assets.
They can be used to hedge risk, provide leverage, or simply to speculate.  The major benefit of derivatives is
that they facilitate the buying and selling of risk.

While a derivative may depend on only one asset, in this paper we study derivatives that depend on many assets,
specifically, collateralized debt obligations, or CDOs.  Over \$500 billion of CDOs were issued in 2006, but CDOs played a major role in the financial crisis.
Issuance plummeted after the financial crisis but in 2014 over \$100 billion of CDOs were issued, and the numbers continue to grow \cite{Buc}.

The structure of a CDO is quite intuitive.
A CDO packages many underlying assets into \emph{tranches}.
For example, a CDO could have 100 underlying mortgages, each of which is supposed to pay \$1,000.
The ``senior" tranche, for instance, could collect the first \$85,000.  Thus, if more than \$85,000 is paid from these 100 mortgages,
this tranche receives \$85,000; if some amount $x\leq \$85,000$ is paid, the tranche receives~$x$.
The next tranche could range from \$85,000 to \$95,000.  If more than \$95,000 is paid, this tranche
receives the full \$10,000; if less than \$85,000 is paid, this tranche receives nothing.  If 
the amount $x$ paid is between \$85,000 and \$95,000, then
the tranche receives $x - \mbox{\$85,000}$.
In general, the $[a,b]$ tranche receives $\min(x,b) - \min(x,a)$.
Note that senior tranches can be much safer than the underlying assets, which is part of the lure.

This paper focuses on such derivatives in the context of asymmetric information.
Akerlof introduced a framework to study such asymmetric information, using the market for used cars to capture the main ideas \cite{Ake}.
Here we follow the simplified discussion in Arora et al.\ \cite{abbg}.
There is an information asymmetry in that a seller knows whether a car is a lemon, i.e., functions poorly and is worth nothing, whereas a buyer cannot detect this with a short test drive.
Imagine that a functional used car is worth \$1000, but everybody knows that 20\% of used cars are lemons.
A buyer would be willing to pay at most \$800 for a car that he thought had a 20\% chance of being a lemon, but a seller who knows that her car is not a lemon would not be willing to sell it for this price.  In order for a sale to take place, the buyer's value must be \$200 more than the seller's value, and this is called the ``lemon cost."

Let's examine how asymmetric information affects CDOs.
A seller may be aware that certain underlying assets are lemons, and try to strategically place the lemons among the derivatives
in order to minimize the derivatives' value.
However, many believed that since the seller usually retains the junior tranches which take the first losses, the senior tranches are less vulnerable to such manipulation, giving smaller lemon cost.  In fact, DeMarzo proved this \cite{DeM}); however, he implicitly assumed that buyers had unlimited computational power.

Arora, Barak, Brunnermeier, and Ge \cite{abbg} introduced computational complexity into this discussion.
They showed that contrary to the conventional wisdom, once computational
complexity is accounted for, the lemon costs of derivatives could
increase dramatically, at least under a plausible computational assumption.

In the 2008 financial crisis, lemon packing did occur.
For example, in Chapter 4 of \emph{The Big Short} \cite{Lew},
Michael Lewis describes how Wall Street firms packed CDOs with loans to people with low FICO scores, or whose high FICO scores were based on a short credit history.
This is because the rating agencies Moody's and S\&P only asked for the average FICO score in a CDO.
A CDO of loans with half FICO scores of 550 and half with 680 is much riskier than one with all FICO scores of 615, because a FICO score of 550 indicates very likely default.
Moreover, the rating agencies ignored the length of the credit history.
Of course, this is not difficult to catch computationally, but it does indicate a willingness of banks to pack lemons into CDOs in a way that's hard to detect by a buyer.

Before describing how we get around the negative result of Arora et al.,
we first describe their model.
There are $\assetsize$ assets and $\derivsize$ derivatives, where each derivative is a function of $\derivdeg$ underlying assets.
We will have $\assetsize \ll \derivsize \derivdeg$, so that each asset underlies several derivatives.
This is typically not the case if the underlying assets are, say, mortgages; however, in the common case that the underlying assets are credit
default swaps, it is often the case that an asset underlies several derivatives.
Alternatively, we can replace each asset with an asset class of similarly performing assets, such as mortgages from the same market, and now duplicates correspond to samples from the same asset class.  Arora et al.\ go back and forth between assets and asset classes; for simplicity we stick to assets.

Arora et al.\ model the relationship between derivatives and underlying assets as a bipartite graph.
The nodes are the derivatives and assets, and there is an edge between a derivative and an asset if the derivative depends
on that asset.  If the derivatives are for sale to the public, then the seller must make this graph public.

Now consider a seller who knows that particular underlying assets are lemons.
For certain tranches of CDOs, it is advantageous for the seller to concentrate many of these lemons into a small number of derivatives.
Thus, these lemons and the lemon-loaded derivatives will correspond to a dense subgraph of the original graph.
Arora et al.\ observed that if it is computationally intractable to check whether an arbitrary graph (or even a somewhat random graph)
contains a dense subgraph, then
it is computationally intractable to catch such a dishonest seller.  Therefore, the lemon cost could be quite high.

Specifically, suppose there is no polynomial-time algorithm that distinguishes between random graphs and random graphs with a planted dense subgraph.
Say there are $\ell$ lemons.
Arora et al.\ define the lemon cost as the value without any lemons minus the value with lemons.
They show that while the lemon cost for a rational, time-unbounded buyer who can distinguish the above graphs
is $o(\ell)$, the lemon cost for a polynomial-time bounded buyer will actually be $\omega(\ell)$.

We circumvent this problem.
Instead of allowing the seller to use an arbitrary bipartite graph to construct the CDO family,
we mandate that the seller use a specific bipartite graph.
Of course, the buyer can easily check that the seller did use the specified graph.
While the seller will still be able to assign assets to nodes arbitrarily, we can choose the graph judiciously to avoid the dense subgraph problem,
for the following reason.
Although it may be computationally intractable to test whether an arbitrary graph,
or even a somewhat-random graph,
contains a dense subgraph, it is nevertheless possible to explicitly construct a graph with no dense subgraphs.
We choose such a graph for the seller.

Graphs with no dense subgraphs are related to certain fundamental objects in the theory of pseudorandomness:
randomness extractors and expander graphs.
For a survey of these objects and other aspects of pseudorandomness, see \cite{v:prg-survey}.

The reader may wonder who is requiring the use of such a graph.
Since we show that a seller can't gain much by strategically placing lemons for such a graph, a seller may be motivated to use such a graph to attract buyers.
Alternatively, a regulator might mandate the use of such a graph.
Financial authorities have tried to encourage the transparency of financial products since the 2008 crisis, and our CDO constructions could help contribute towards this worthy goal.

Our constructions motivate the notion of what we call a \emph{pseudorandom derivative family}.  This is a set of derivatives such that no matter how the
lemons are placed by an adversary, the sum of the value changes of the derivatives will be small.
In other words, adversarial placement of lemons behaves similarly to random placement.

Alternatively, we may decompose
the lemon cost into
the unavoidable lemon cost plus the cost of dishonest placement.
The unavoidable lemon cost is the lemon cost for an honest seller who randomly places the lemons.
The cost of dishonest placement is the additional cost from a dishonest seller who strategically places the lemons.
A~pseudorandom derivative family is one where the cost of dishonest placement is small.

One could imagine a few methods to force the seller to package the CDOs randomly.  For example, the seller could use a cryptographic hash function of the current time.  However, this still allows the seller some flexibility, in that the seller may choose various different times, and even different naming of the assets, to find the most profitable lemon placement.  Alternatively, the seller could use some fixed seemingly-pseudorandom string, such as the digits of $\pi$, but there is no guarantee that the digits of $\pi$ will have the desired property.

In our main result, we show how to construct good pseudorandom derivative families, using
expander graphs with expansion close to the degree.
Moreover, this is \emph{certifiably} pseudorandom, in the sense that there is a proof that the CDO packaging is close to fair.
In other words, there is zero chance of an unfair packaging, whereas with a cryptographic method there would be a positive chance of an unfair packaging.

Of course, in order to analyze values we need a model for the underlying assets.  Arora et al.\ assume the underlying
assets are independent fair coin flips, taking the value 1 with probability 1/2, whereas the lemons always take value 0.
As their result was negative, a simpler model gives a stronger result.  However, we strive for a positive result, so we
analyze a more realistic model with dependencies.

We use the factor framework, which is a common way to model CDOs (see \cite{CouL} for an overview).
All dependencies among the assets occur through a global random variable~$\globe$ that represents a set of ``factors."
For example, these factors could include the the state of the economy and housing market. 
Many papers are written where there is just one 1-dimensional factor, but our model is more general in that we make no assumptions about $\globe$.
Conditioned on $\globe$, all assets are independent.  

Furthermore, we only require that the probability distribution on any $\derivdeg$ assets depends solely on how many of the assets are lemons.
We also don't need to assume lemons have value 0.  Rather, for our strongest results, we assume that for any fixing of the global random variable,
good assets first-order stochastically dominate lemons (see Subsection~\ref{model}).

With this as our model, we study arbitrary tranches of CDOs.
We obtain an error bound of $2\Delta/d$ times the trivial bound, where $d$ is the left degree and small sets on the left expand by a factor of $d-\Delta$.
There are explicit constructions where this is $o(1)$.
We obtain even stronger results for the entire CDO.
Arora et al.\ analyze only senior tranches of CDOs.


\smallskip
\noindent {\bf Related Work.}
After the preliminary version of this paper, there has been more related work on the computational complexity of financial products.
Braverman and Pasricha showed that pricing compound options (options of options) can be computationally intractable, in fact PSPACE-hard \cite{BraP}.
Hemenway and Khanna proved that it is computationally intractable to estimate the number of failures caused by a small shock to a financial system \cite{HemK}.
Schuldenzucker, Seuken, and Battiston showed how the computational complexity of clearing financial networks can be greatly increased once banks enter into credit default swaps \cite{ssb}.
We believe there is a lot more to be explored at the intersection of finance and the theory of computation.

\smallskip
\noindent {\bf Organization.}
We begin by explaining our model and defining key terms in Section~\ref{sec:def}.
We then describe how expander graphs give pseudorandom CDOs in Section~\ref{sec:expand-cdo}.
We modify existing expander constructions to obtain our CDOs in Section~\ref{sec:explicit}.
We analyze the case when good assets don't necessarily stochastically dominate lemons in Section~\ref{sec:general}.
We discuss some extensions in Section~\ref{sec:extend}, and conclude in Section~\ref{sec:conclude}.

\section{The Model and Key Definitions
}
\label{sec:def}

First we give some notation.  For a positive integer $n$, we let $[n]$ denote the set $\set{1,2,\ldots,n}$.  For a vector $v=(v_1,\ldots,v_s)$,
we let $\|v\|_1=\sum_i |v_i|$ be the $L_1$ norm.

Our CDOs will be functions of underlying assets.  We first describe our assumptions about the underlying assets, and then define pseudorandom CDOs.

\subsection{Model for Underlying Assets}
\label{model}

In our model, there are two types of assets, \emph{lemons} and \emph{good assets}.
Good assets must first-order stochastically dominate lemons in a sense below.  This requirement will be satisfied if lemons always take value zero, but it allows more general distributions
on lemons.
Each CDO will depend on $\derivdeg$ assets.
Our results hold as long as the probability distribution on any $\derivdeg$ assets depends only on how many of the assets are lemons.

We now elaborate on one natural model which satisfies the two requirements above.
We model dependencies among assets as occurring through some global random variable~$\globe$.
This $\globe$ represents a set of factors, such as the state of the economy and housing market.
We make no assumptions about~$\globe$.
For each fixing of $\globe$, to say $z$,
there are two probability distributions $D_g = D_g(z)$ and $D_\ell = D_\ell (z)$.
Conditioned on $\globe=z$, our model assumes all assets are independent, with
good assets chosen according to $D_g$, and lemons chosen according to~$D_\ell$.
Moreover, we say good assets first-order stochastically dominate lemons if for any $z$ and $a$,
\[ \Pr_{X \sim D_g(z)}[X \geq a] \geq \Pr_{Y \sim D_\ell(z)}[Y \geq a].\]

We can relax the requirement that assets are conditionally independent.
It suffices that the conditional distribution on assets is $\rightdeg$-wise independent, i.e., any $\rightdeg$ of them are independent.
(This does not imply that they are mutually independent.)

We normalize asset values so that each asset's maximum value is 1.
We let $\mu$ and $\lambda$ be the expected values of each good asset and lemon, respectively.  The dominance requirement implies $\mu \geq \lambda$,
and let $\delta = \mu - \lambda$ be the additional expected value of a good asset.

\subsection{Pseudorandom CDOs}

\begin{definition}
\label{cdo-def}
A collateralized debt obligation (CDO) is a derivative on an underlying portfolio of 
assets.  For $0=a_0 < a_1 < \ldots < a_s$ 
(called attachment points),
the $i$th tranche is given by the interval $[a_{i-1},a_i]$.  If the underlying portfolio pays off $x$, then the value of the $i$th tranche is
$\val_{[a_{i-1},a_i]}(x)=\min(x,a_i) - \min(x,a_{i-1})$.
If the tranche is understood, we often omit it as a subscript in $\val$.
\end{definition}

Since assets are normalized to have maximum value 1, if the CDO depends on $\derivdeg$ assets, the last attachment
point is $a_s = \derivdeg$.

We will be interested in families of CDOs.

\newcommand{\cdofam}{$(\assetsize,\derivsize,\derivdeg)$-CDO family}
\begin{definition}
An $(\assetsize,\derivsize,\derivdeg)$-\emph{CDO family} is a set of $\derivsize$ CDOs on $\assetsize$ assets identified with the set $[\assetsize]$,
where each CDO depends on $\derivdeg$ assets.
\end{definition}

We will have $\assetsize \ll \derivsize \derivdeg$, so that each asset underlies several derivatives.

The seller (creator of the CDOs) knows that some $\lemonsize$ assets are lemons, and may identify the lemons with any
subset $L \subseteq [\assetsize]$ of size $\lemonsize$.  We will be interested in the total value of tranches in our CDO family.

\newcommand{\vectotval}{\vec{\tv}}
\begin{definition}
For $L \subseteq [\assetsize]$, let $\totval_{[a,b]}(L)$ denote the total expected value of all $[a,b]$ tranches in the CDO
family, if the assets corresponding to assets $L$ are lemons.
If the tranche is understood, we often omit it as a subscript.
We define the vector $\vectotval(L) = (\totval_{[a_0,a_1]}(L),\totval_{[a_1,a_2]}(L),\ldots,\totval_{[a_{s-1},a_s]}(L))$.
\end{definition}

A dishonest seller will try to choose the subset $L$ to minimize $\totval(L)$.
For example, Arora et al.\ assume that the seller retains all junior tranches, to signal that his assets are high quality.
He then has an incentive to concentrate risk in some senior tranches, minimizing the value of these tranches.
A CDO family is pseudorandom if the seller cannot gain significantly
by this choice.  In the scenario envisioned by Arora et al., we want pseudorandomness with respect to the senior tranches.

\begin{definition}
An $(\assetsize,\derivsize,\derivdeg)$-CDO family is \emph{pseudorandom} for $\lemonsize$ lemons for $[a,b]$ tranches
with error $\eps$ if for any two subsets $L,L' \subseteq [\assetsize]$ of size $\lemonsize$,
\[ |\totval_{[a,b]}(L') - \totval_{[a,b]}(L)| \leq \eps \derivsize(b-a). \]

\end{definition}

Note that $\derivsize(b-a)$ is the maximum possible value of the $[a,b]$ tranches with no lemons.
Thus, for any CDO family the error $\eps$ is at most~1.

We further define pseudorandomness for the entire CDO family.
We can't generalize the above definition naively, to say that the total value of the CDO doesn't change significantly if the lemons are moved.
This is because the total value of the CDO equals the total value of the underlying assets; therefore moving lemons won't change the value at all.
Instead, we strengthen the definition to ensure that not much value can be transferred among the different tranches.
That is, we add up the value changes of each tranche; this gives the $L_1$-norm.

\begin{definition}
An $(\assetsize,\derivsize,\derivdeg)$-CDO family is \emph{pseudorandom} for $\lemonsize$ lemons
with error $\eps$ if for any two subsets $L,L' \subseteq [\assetsize]$ of \mbox{size $\lemonsize$,}
\[ \|\vectotval(L') - \vectotval(L)\|_1 \leq \eps \derivsize \derivdeg. \]
\end{definition}

Note that $\derivsize \derivdeg$ is the maximum possible value of the entire CDO family with no lemons.
The error $\eps$ for the CDO family is at most the maximum error for a tranche, and hence at most 1.

We can compare our notion of pseudorandom error to the traditional notion of lemon cost.
The lemon cost is the value without any lemons minus the value with lemons.
In a sense, we are dividing the lemon cost into two components:  the unavoidable lemon cost plus the cost of dishonest placement.
The unavoidable lemon cost is the lemon cost for an honest seller who randomly places the lemons.
The cost of dishonest placement is the additional cost from a dishonest seller who strategically places the lemons.
Thus, the normalized cost of dishonest placement is upper bounded by the pseudorandom error.

On the other hand, the pseudorandom error above is at most the normalized lemon cost, but it could be significantly less.
For example, if all underlying assets are lemons, the lemon cost will be high, but the error in our definition will be 0, since the value doesn't change depending on the lemon placement.
Indeed, the pseudorandom error is small if the lemon cost doesn't depend significantly on the lemon placement.

\subsection{Bipartite Expander Graphs}

Following Arora et al., we view the relationship between derivatives and underlying assets as a bipartite graph.  We review the basic definitions.

\begin{definition}
A bipartite graph is a triple
$(\assets,\derivs,E)$, with \emph{left vertices} $\assets$, \emph{right vertices} $\derivs$, and \emph{edges} $E \subseteq \assets \times \derivs$.  We usually view $E$ as unordered pairs of vertices.
Sometimes we refer to a bipartite graph on $\assets \cup \derivs$ to mean some bipartite graph $(\assets,\derivs,E)$ with suitable choice of edges $E$.
For a subset of vertices $S \subseteq \assets \cup \derivs$, let $\Gamma(S) = \set{v| (\exists w \in S) \set{v,w} \in E}$ denote the set of \emph{neighbors} of~$S$.
We often write $\Gamma(v)$ for $\Gamma(\set{v})$.
The \emph{degree} of a vertex~$v$ is $|\Gamma(v)|$.  The graph is \emph{$d$-left-regular} if all left vertices have degree~$d$, and similarly for right-regular.
The graph is $(\leftdeg,\rightdeg)$-biregular if it is $\leftdeg$-left-regular and $\rightdeg$-right-regular.
\end{definition}

The vertices $\assets$ and $\derivs$ correspond to the assets and derivatives, respectively,
with an edge between a derivative vertex and asset vertex if the derivative depends on the asset.

Since Arora et al.\ showed how dense subgraphs can be problematic, we choose a graph with no dense subgraphs.
It is natural to use known constructions of suitable ``randomness extractors," which can be shown to lack dense subgraphs.
Indeed, this was our original approach.  However, we obtain stronger results in a simpler manner by considering the related \emph{expander graphs},
where we require expansion of asset vertices.

\begin{definition}
A bipartite graph on $[\leftsize] \cup [\rightsize]$ is an $(\kmax,\expansion)$-expander if for every subset $S \subseteq [\leftsize]$ of size at most $\kmax$,
\mbox{$|\Gamma(S)| \geq \expansion |S|$.}
\end{definition}

Note that we only need expansion of left vertices; expansion of right vertices is not required.
We will need a strong form of an expander, called a \emph{unique-neighbor expander}.

\begin{definition}
Let $\Gamma_i(S)$ denote the set of vertices $v \in \Gamma(S)$ with $|\Gamma(v) \cap S| = i$.
$\Gamma_1(S)$ are called the \emph{unique neighbors} of $S$.
\end{definition}

\begin{definition}
A bipartite graph on $[\leftsize] \cup [\rightsize]$ is an $(\kmax,\expansion)$-unique-neighbor expander if for every subset $S \subseteq [\leftsize]$ of size at most $\kmax$, $|\Gamma_1(S)| \geq \expansion |S|$.
\end{definition}

Note that to obtain unique neighbor expansion, the graph left-degree can't be too large.
Specifically, we must have $\kmax d < \rightsize$.
In other words, each asset participates in somewhat few derivatives.
This seems natural enough, although in the Future Work section we discuss trying to handle the case when this is false.

The following simple lemma is well known.

\begin{lemma}
\label{simple-unique}
A $d$-left-regular $(\kmax,\leftdeg - \Delta)$-expander is an $(\kmax,\leftdeg - 2\Delta)$-unique neighbor expander.
\end{lemma}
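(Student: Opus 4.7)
The plan is a standard double-counting argument comparing the edge count from $S$ to the number of unique versus non-unique neighbors.

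First I would fix an arbitrary $S \subseteq [\leftsize]$ with $|S| \leq \kmax$ and count the edges leaving $S$ in two different ways. Since the graph is $d$-left-regular, the total number of edges incident to $S$ is exactly $d|S|$. On the other hand, grouping right vertices by their multiplicity of neighbors in $S$ yields
\[ d|S| \;=\; \sum_{i \geq 1} i \cdot |\Gamma_i(S)| \;\geq\; |\Gamma_1(S)| + 2\bigl(|\Gamma(S)| - |\Gamma_1(S)|\bigr), \]
since every vertex in $\Gamma(S) \setminus \Gamma_1(S)$ contributes at least $2$ edges to $S$.

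Next I would invoke the expansion hypothesis: $|\Gamma(S)| \geq (d - \Delta)|S|$. Substituting into the inequality above and rearranging gives
\[ |\Gamma_1(S)| \;\geq\; 2|\Gamma(S)| - d|S| \;\geq\; 2(d-\Delta)|S| - d|S| \;=\; (d - 2\Delta)|S|, \]
which is exactly the conclusion that the graph is a $(\kmax, d-2\Delta)$-unique-neighbor expander.

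There isn't really a main obstacle here; the only thing to be slightly careful about is making sure the inequality $d|S| \geq |\Gamma_1(S)| + 2(|\Gamma(S)| - |\Gamma_1(S)|)$ is stated with the right sign so that it yields a \emph{lower} bound on $|\Gamma_1(S)|$ after combining with the expansion assumption. The proof is a two-line calculation once this framing is in place.
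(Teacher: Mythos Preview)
Your proposal is correct and is essentially the same double-counting argument the paper gives: the paper phrases it as ``at most $\Delta|S|$ edges are unaccounted for, so $|\Gamma_1(S)| \geq |\Gamma(S)| - \Delta|S|$,'' which is exactly your inequality $|\Gamma_1(S)| \geq 2|\Gamma(S)| - d|S|$ rewritten using $d|S| - |\Gamma(S)| \leq \Delta|S|$. No differences worth noting.
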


\begin{proof}
Consider any subset $S$ on the left of size $\ksize \leq \kmax$.  It has at least $(\leftdeg - \Delta)\ksize$ neighbors, which leaves at most $\Delta \ksize$
edges unaccounted for.  Thus 
$|\Gamma_1(S)| \geq |\Gamma(S)| - \Delta \ksize$,
as required.
\end{proof}

It is well known that most graphs are excellent expanders, which can be proven using the probabilistic method.
However, we need to be able to certify that a graph is an expander. 
It appears hard to do this for arbitrary graphs, which is related to
 Arora et al.\ impossibility results.
However, we can construct explicit expanders that are quite strong, though not as good as the non-explicit expanders for our purposes.

Explicit expander constructions are highly nontrivial.
The classic constructions of Margulis \cite{Mar:first,Mar:Ramanujan}, Gabber and Galil \cite{GabG}, and Lubotzky-Phillips-Sarnak \cite{lps} are not known to give unique-neighbor expanders.
Ta-Shma, Umans, and Zuckerman constructed the first unique-neighbor expanders of polylogarithmic left degree \cite{tuz},
and Capalbo et al.\ were the first to achieve constant left degree \cite{crvw}.
For our purposes, the best expanders were constructed by Guruswami, Umans, and Vadhan \cite{guv}, although these have polylogarithmic degree.
For more on expanders we refer the reader to the excellent survey \cite{hlw}.

\section{Expanders Give Pseudorandom CDOs}
\label{sec:expand-cdo}

Before discussing expander constructions, we first show how unique-neighbor expanders give pseudorandom CDOs.
It is helpful to compare our bounds to a natural trivial bound.  To this end, observe
that any biregular \cdofam\ is pseudorandom against $\ell$ lemons for $[a,b]$ tranches with error at most $\degree \lemonsize \delta/(\derivsize (b-a))$.
(Recall that $\delta$ is the difference between the expected values of a good asset and lemon.)
This is because converting $\ell$ good assets to lemons decreases the value of the entire CDO family by $\degree \ell \delta$,
since each lemon is in $\degree$ CDOs.

We show that a CDO family built from a $(\leftdeg,\rightdeg)$-biregular $(\lemonsize,\leftdeg - \Delta)$-unique neighbor expander
has error at most $\Delta \lemonsize \delta/(\derivsize (b-a))$.  That is, we replace $\degree$ from the trivial bound by $\Delta$.
Moreover, the naive bound on the error for the entire CDO is the maximum of the errors for each tranche.  We are instead able to improve the error to
$2\Delta \lemonsize \delta/(\derivsize \derivdeg)$.

The intuition for the proof is natural.  We consider some placement of lemons.  By the unique-neighbor expansion, we have fairly tight bounds on both
the number of derivatives containing no lemons, and the number containing exactly one lemon.  Thus, when we subtract values for two different lemon
placements, there is a lot of cancellation.

\begin{theorem}
\label{unique-cdo}
A CDO built from a $(\leftdeg,\rightdeg)$-biregular $(\lemonsize,\leftdeg - \Delta)$-unique neighbor expander is
pseudorandom for $\lemonsize$ lemons.
For the tranche $[a,b]$, the error is at most $\Delta \lemonsize \delta/(\derivsize(b-a))$, and for the entire CDO the error is
at most $2\Delta \lemonsize \delta/(\derivsize \derivdeg)$.
\end{theorem}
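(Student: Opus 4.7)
The plan is to parameterize the analysis by the histogram of lemons per CDO and use the unique-neighbor expansion to tightly constrain that histogram. For each placement $L$, let $\rho_k$ be the number of CDOs containing exactly $k$ lemons, and define $\rho_k'$ analogously for $L'$. Because the joint distribution on any $\rightdeg$ assets depends only on how many of them are lemons, the expected $[a,b]$-tranche value of a single CDO is a function $g(k)$ of its lemon count alone, so $\totval_{[a,b]}(L)=\sum_k \rho_k g(k)$. Setting $h(k)=g(0)-g(k)$, the first step is the Lipschitz-type bound $h(k)\leq k\delta$: for each value $z$ of the global variable $\globe$, the stochastic dominance of $D_g(z)$ over $D_\ell(z)$ admits a monotone coupling that reduces the expected payoff by $k\delta$, and $\val_{[a,b]}$ is monotone and $1$-Lipschitz in the payoff, so the expected tranche value drops by at most $k\delta$. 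In particular $h(1)\leq\delta$.

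Next, the expander contributes two histogram facts. Left-regularity yields the incidence identity $\sum_k k\rho_k=\leftdeg\lemonsize$, and the $(\lemonsize,\leftdeg-\Delta)$ unique-neighbor expansion gives $\rho_1\geq(\leftdeg-\Delta)\lemonsize$; combining them, $\sum_{k\geq 2}k\rho_k\leq\Delta\lemonsize$ and $\rho_1\in[(\leftdeg-\Delta)\lemonsize,\leftdeg\lemonsize]$, and the same bounds hold for $\rho_k'$. Consequently $|\rho_1-\rho_1'|\leq\Delta\lemonsize$ and $\sum_{k\geq 2}k|\rho_k-\rho_k'|\leq 2\Delta\lemonsize$.

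For the per-tranche bound, I would start from $\totval_{[a,b]}(L)-\totval_{[a,b]}(L')=-\sum_{k\geq 1}(\rho_k-\rho_k')h(k)$ and use the identity $\sum_k k(\rho_k-\rho_k')=0$ to eliminate the $k=1$ term, rewriting the whole expression as $\sum_{k\geq 2}(\rho_k-\rho_k')(kh(1)-h(k))$. Since both $kh(1)$ and $h(k)$ lie in $[0,k\delta]$, each parenthesis is at most $k\delta$ in absolute value, and the histogram inequality above yields $|\totval_{[a,b]}(L)-\totval_{[a,b]}(L')|\leq 2\Delta\lemonsize\delta$; dividing by $\derivsize(b-a)$ gives the stated error.

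For the whole-CDO bound I would sum tranche contributions directly and skip the elimination trick. Writing $h_i(k)$ for the tranche-$i$ version of $h(k)$, the partition identity $\sum_i\val_{[a_{i-1},a_i]}(x)=\min(x,a_s)=x$ (valid since $a_s=\derivdeg$ and each asset has value at most $1$) gives $\sum_i h_i(k)=k\delta$. Therefore
\[
\|\vectotval(L)-\vectotval(L')\|_1\leq\sum_{k\geq 1}|\rho_k-\rho_k'|\sum_i h_i(k)=|\rho_1-\rho_1'|\delta+\sum_{k\geq 2}|\rho_k-\rho_k'|k\delta\leq \Delta\lemonsize\delta+2\Delta\lemonsize\delta=3\Delta\lemonsize\delta,
\]
and dividing by $\derivsize\derivdeg$ yields the claimed error. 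The main obstacle is the stochastic-dominance step $h(k)\leq k\delta$: although intuitive, it is the sole place where the full hypothesis on $D_g,D_\ell$ is used, and it requires first building a monotone coupling for each conditioning $\globe=z$ and then integrating over $\globe$; everything downstream is accounting with the histogram identities.
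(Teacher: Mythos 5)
Your proof is correct and follows essentially the same route as the paper: both arguments reduce everything to the histogram of lemon counts per CDO, use left-regularity plus unique-neighbor expansion to get $\rho_1\geq(\leftdeg-\Delta)\lemonsize$ and $\sum_{k\geq 2}k\rho_k\leq\Delta\lemonsize$, establish $0\leq h(k)\leq k\delta$ via the monotone coupling from dominance, and exploit the partition identity $\sum_i h_i(k)=k\delta$ for the whole-CDO bound. The only (cosmetic) differences are that you obtain $h(k)\leq k\delta$ directly from $1$-Lipschitzness of $\valab$ rather than from nonnegativity plus the telescoping sum over tranches, and you cancel the $k=1$ term with the edge-count identity where the paper bounds it by $\Delta\lemonsize\delta$ separately.
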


Before beginning the proof, we recall that $\val$ is the tranche value as defined in Definition~\ref{cdo-def}, and define the following.

\begin{definition}
Let $\valu_{[a,b]}(g) = \expect[\val_{[a,b]}(X)]$, where the random variable $X$ is the payoff of an underlying portfolio on $\derivdeg$ assets,
$g$ of which are good.
If the tranche is understood, we often omit it as a subscript.
\end{definition}

Since good assets first-order stochastically dominate lemons, we deduce that $\valu$ is a nondecreasing function of $g$.  This is obvious if lemons always take value zero, but
requires a short proof in general.

\begin{lemma}
\label{nondecreasing}
For any tranche $[a,b]$ and $g' \geq g$, we have $\valu_{[a,b]}(g') \geq \valu_{[a,b]}(g)$.
\end{lemma}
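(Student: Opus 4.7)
The plan is to reduce to the case $g' = g+1$ by a telescoping argument, and then exhibit a monotone coupling between the two portfolios that differ in a single asset. By iterating, it suffices to show $\valu_{[a,b]}(g+1) \geq \valu_{[a,b]}(g)$ for every $0 \leq g < \derivdeg$.

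First I would condition on the global variable $\globe = z$. Under this conditioning, the $\derivdeg$ assets of a single CDO are jointly independent (the $\derivdeg$-wise independence assumption is exactly strong enough here, since only $\derivdeg$ assets underlie one CDO), with good assets drawn from $D_g(z)$ and lemons from $D_\ell(z)$. The dominance hypothesis $\Pr_{X \sim D_g(z)}[X \geq a] \geq \Pr_{Y \sim D_\ell(z)}[Y \geq a]$ is the definition of first-order stochastic dominance, and it is a standard fact (via the quantile/inverse-CDF construction on a common uniform variable) that this is equivalent to the existence of a coupling $(X,Y)$ with $X \sim D_g(z)$, $Y \sim D_\ell(z)$, and $X \geq Y$ almost surely.

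Next I would apply this coupling on one coordinate only. To compare a portfolio with $g$ good assets and a portfolio with $g+1$ good assets, I keep the $g$ shared good assets and the $\derivdeg - g - 1$ shared lemons identical across the two portfolios, and I couple the remaining lemon $Y$ of the first portfolio with the extra good asset $X$ of the second so that $X \geq Y$ pointwise. The two portfolio payoffs then satisfy $S_{g+1} = S_g + (X - Y) \geq S_g$ pointwise. Because $\val_{[a,b]}(x) = \min(x,b) - \min(x,a)$ is nondecreasing in $x$, we conclude $\val_{[a,b]}(S_{g+1}) \geq \val_{[a,b]}(S_g)$ pointwise. Taking expectations over the coupled randomness and then over $\globe$ gives $\valu_{[a,b]}(g+1) \geq \valu_{[a,b]}(g)$.

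The one step that really uses the hypotheses is the monotone coupling itself, and this is the place where the proof might be considered delicate: stochastic dominance does not give a pointwise inequality on a single probability space for free, and the quantile coupling has to be invoked explicitly. Everything else, namely the reduction to changing one asset, the monotonicity of the tranche function, and the conditioning on $\globe$, is purely structural bookkeeping.
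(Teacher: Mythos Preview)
Your proposal is correct and is essentially the same argument as the paper's: both condition on $\globe=z$, invoke the quantile/inverse-CDF coupling to realize the stochastic dominance as a pointwise inequality $X\geq Y$, and then use that the tranche payoff is monotone in the portfolio total. Your version is slightly more explicit (the telescoping to $g'=g+1$ and the observation that $\val_{[a,b]}$ is nondecreasing), but there is no substantive difference.
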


\begin{proof}
First fix $\globe = z$.  Now let $F_D$ denote the cumulative distribution function of distribution $D$.  We can choose random variables $X$ and $Y$ according
to  $D_g=D_g(z)$
and $D_\ell = D_\ell(z)$, respectively, by choosing $W \in [0,1]$ uniformly and outputting $X= F_{D_g}^{-1}(W)$ and $Y= F_{D_\ell}^{-1}(W)$.
This ``coupling" and
the domination condition imply that for every point in the probability space, $X \geq Y$.  Thus, we may substitute good assets for lemons in such a way that
for any point in the probability space, the value of every asset either increases or remains the same.  The lemma follows.
\end{proof}

Recall that $\mu$ and $\lambda$ are the expected values of each good asset and lemon, respectively, and $\delta = \mu - \lambda$.
Since a CDO simply restructures payoffs, the sum of the expected payoffs of the CDO equals the sum of the payoffs of the underlying assets,
implying the following observation.

\begin{observation}
\label{valuesum}
For any $g$, we have $\sum_{i=1}^s \valu_{[a_{i-1},a_i]}(g) = g\mu + (\derivdeg - g)\lambda = \derivdeg \lambda + g \delta$.
\end{observation}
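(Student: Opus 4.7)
The plan is to prove the identity pointwise for any fixed realization of the portfolio payoff, and then take expectations. For a portfolio paying off $x$, I would observe that the individual tranche values telescope:
\[ \sum_{i=1}^s \val_{[a_{i-1},a_i]}(x) = \sum_{i=1}^s \bigl(\min(x,a_i) - \min(x,a_{i-1})\bigr) = \min(x,a_s) - \min(x,a_0). \]
Since $a_0 = 0$ and, by the normalization convention noted right after the CDO definition, $a_s = \derivdeg$, and since each asset is normalized to lie in $[0,1]$ so the portfolio payoff satisfies $0 \leq x \leq \derivdeg$, the right side simplifies to $x$. Thus the tranches partition the total payoff, as one would expect from their economic interpretation.

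Next I would take expectations. Writing $X$ for the random portfolio payoff when exactly $g$ of the $\derivdeg$ underlying assets are good, linearity gives
\[ \sum_{i=1}^s \valu_{[a_{i-1},a_i]}(g) = \expect\!\left[\sum_{i=1}^s \val_{[a_{i-1},a_i]}(X)\right] = \expect[X]. \]
Decomposing $X = \sum_j X_j$ over the individual asset values and applying linearity once more, each of the $g$ good assets contributes $\mu$ in expectation and each of the $\derivdeg - g$ lemons contributes $\lambda$, so $\expect[X] = g\mu + (\derivdeg - g)\lambda$. Rearranging as $\derivdeg \lambda + g(\mu - \lambda) = \derivdeg \lambda + g\delta$ completes the identity.

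I do not expect any real obstacle: the content is a telescoping sum combined with linearity of expectation. The only two small points to keep track of are (i) the telescoping collapses to $x$ only because of the convention $a_s = \derivdeg$ and the payoff bound $x \leq \derivdeg$, and (ii) the expectation step does not require the $\derivdeg$ underlying assets to be independent, only that the marginal expectations are $\mu$ for good assets and $\lambda$ for lemons. These marginals hold conditionally on each fixing $\globe = z$ of the global variable, and hence unconditionally by averaging, so the dependence model from Section~\ref{sec:def} poses no issue.
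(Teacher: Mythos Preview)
Your proposal is correct and is precisely a fleshed-out version of the paper's own justification, which simply notes in one sentence that ``a CDO simply restructures payoffs, so the sum of the expected payoffs of the CDO equals the sum of the payoffs of the underlying assets.'' Your telescoping computation and linearity-of-expectation argument make this explicit; there is no substantive difference in approach.
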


Lemma~\ref{nondecreasing} and Observation~\ref{valuesum} imply the following corollary.

\begin{corollary}
\label{valuesumcor}
For any $g$, $i$, and tranche $[a,b]$, we have
$0 \leq \valu_{[a,b]}(g+i) - \valu_{[a,b]}(g) \leq i \delta$.
\end{corollary}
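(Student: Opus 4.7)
The plan is to derive both inequalities from the two results immediately preceding the corollary, with Observation~\ref{valuesum} handling the upper bound and Lemma~\ref{nondecreasing} handling the lower bound.

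First I would dispose of the lower bound $\valu_{[a,b]}(g+i) - \valu_{[a,b]}(g) \geq 0$ by a direct appeal to Lemma~\ref{nondecreasing}, applied with $g' = g+i$. This gives that each individual tranche's expected value is monotonically nondecreasing in the number of good assets.

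For the upper bound, the idea is to embed $[a,b]$ into a full decomposition of $[0,\derivdeg]$ and invoke Observation~\ref{valuesum}. Specifically, I would extend the tranche $[a,b]$ to a partition of $[0,\derivdeg]$ by adjoining (at most) the two tranches $[0,a]$ and $[b,\derivdeg]$; any of these three intervals that is degenerate contributes $0$ to everything and can be dropped. Applying Observation~\ref{valuesum} to this tranche structure at both $g+i$ and $g$ and subtracting yields
\[
\bigl[\valu_{[0,a]}(g+i) - \valu_{[0,a]}(g)\bigr] + \bigl[\valu_{[a,b]}(g+i) - \valu_{[a,b]}(g)\bigr] + \bigl[\valu_{[b,\derivdeg]}(g+i) - \valu_{[b,\derivdeg]}(g)\bigr] = i\delta.
\]
Each of the three bracketed differences is nonnegative by Lemma~\ref{nondecreasing}, so the middle term, being at most the entire sum, is at most $i\delta$.

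There is essentially no hard step here: the only thing to be mildly careful about is handling the boundary cases $a=0$ or $b=\derivdeg$ so that Observation~\ref{valuesum} applies to an honest tranche partition, but since a tranche of zero width contributes $0$ to $\valu$ this is a formality. The whole argument is a two-line consequence of monotonicity plus the telescoping identity already established.
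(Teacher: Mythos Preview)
Your proposal is correct and is exactly the argument the paper has in mind: the paper gives no explicit proof, merely stating that the corollary follows from Lemma~\ref{nondecreasing} and Observation~\ref{valuesum}, and your two-step derivation (monotonicity for the lower bound, telescoping over a full partition plus monotonicity for the upper bound) is the natural way to unpack that sentence.
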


Let $\badthresh_i(L) = |\Gamma_i(L)|$, for $0 \leq i \leq \rightdeg$.
The following lemma sets up an expression for the error.

\begin{lemma}
\label{valuediff}
Fix a tranche $[a,b]$ and any $L,L' \subseteq [\assetsize]$ with $|L| = |L'| = \lemonsize$. Then:
\[ \totval(L') - \totval(L) = \sum_{i=1}^\rightdeg (\badthresh_i(L) - \badthresh_i(L')) (\valu(\rightdeg) - \valu(\rightdeg - i)).\]
\end{lemma}

\begin{proof}
Since $\cup_{i=0}^\rightdeg \Gamma_i(L) = [\derivsize]$, we have $\sum_{i=0}^\rightdeg \badthresh_i(L) = \derivsize$.

Observe that
\[ \totval(L) = \sum_{i=0}^\rightdeg \badthresh_i(L) \valu(\rightdeg - i).\]
Using $\sum_{i=0}^\rightdeg (\badthresh_i(L) - \badthresh_i(L'))=0$,
we obtain
\begin{eqnarray*}
\totval(L') - \totval(L)
&=& \sum_{i=0}^\rightdeg (\badthresh_i(L) - \badthresh_i(L')) (- \valu(\rightdeg - i)) \\
&=& \sum_{i=0}^\rightdeg (\badthresh_i(L) - \badthresh_i(L')) (\valu(\rightdeg) - \valu(\rightdeg - i)).
\end{eqnarray*}
Observing that the first term in the sum is zero gives the lemma.
\end{proof}

The following inequality will be useful.

\begin{lemma}
\label{lem:inequality}
Suppose $u_i \in [-\beta,\beta]$, $v_i \in [0,\delta]$ for $i=1,2,\ldots,\rightdeg$, and that
\begin{eqnarray}
\sum_{i=1}^\rightdeg u_i &=& 0, \label{zerosum}\\
\sum_{i=2}^\rightdeg |u_i| &\le& \beta. \label{upperbound}
\end{eqnarray}
Then 
\[ \left|\sum_{i=1}^\rightdeg u_i v_i \right| \le \beta \delta. \]
\end{lemma}

Before proving this lemma, we show how the lemma implies the theorem.

\begin{proof}[Proof of Theorem~\ref{unique-cdo}, single tranche]
Fix the tranche $[a,b]$, and we now bound its error.
We apply Lemma~\ref{lem:inequality} with
\begin{eqnarray*}
u_i &=& i(\badthresh_i(L) - \badthresh_i(L')),\\
v_i &=& (\valu(\rightdeg)-\valu(\rightdeg - i))/i,\\
\beta &=& \Delta \lemonsize.\\
\end{eqnarray*}

First note that
\[
\sum_{i=1}^\rightdeg i\badthresh_i(L) = \leftdeg \lemonsize,
\]
since both sides count the number of edges incident to $L$.
Therefore Equation (\ref{zerosum}) is satisfied.
To see that Equation (\ref{upperbound}) and $|u_1| \le \beta$ are satisfied, observe that
$\badthresh_1(L), \badthresh_1(L') \geq (\leftdeg - \Delta) \lemonsize$, and so
\[ \sum_{i=2}^\rightdeg i\badthresh_i(L) \leq \Delta \lemonsize,\]
and similarly for $L'$.
Corollary~\ref{valuesumcor} shows that $0 \le v_i \le \delta$.
By Lemmas~\ref{valuediff} and~\ref{lem:inequality}, we conclude that
\[ |\totval(L') - \totval(L)| = \left|\sum_{i=1}^\rightdeg u_i v_i \right| \le \beta \delta = \Delta \lemonsize \delta.\]
Dividing by $\derivsize (b-a)$ gives the result for the $[a,b]$ tranche.
\end{proof}

We now prove the inequality.

\begin{proof}[Proof of Lemma~\ref{lem:inequality}]
Assume without loss of generality that $u_1 \ge 0$.  Using Equation~(\ref{zerosum}), we get
\[ \sum_{i:u_i \ge 0} u_i = \sum_{i:u_i < 0} |u_i| \le \sum_{i>1} |u_i| \le \beta.\]
Therefore,
\[ \sum_{i=1}^\rightdeg u_i v_i \le \sum_{i:u_i \ge 0} u_i v_i \le \max_i \set{v_i} \sum_{i:u_i \ge 0} u_i \le \delta \beta.\]
Similarly,
\[ \sum_{i=1}^\rightdeg u_i v_i \ge \sum_{i:u_i < 0} u_i v_i \ge \max_i \set{v_i} \sum_{i:u_i < 0} u_i \ge \delta (-\beta).\]
\end{proof}

To analyze the error for the entire CDO, we use the following generalization of Lemma~\ref{lem:inequality}.

\begin{lemma}
\label{lem:inequality-matrix}
Suppose $u_i \in [-\beta,\beta]$, $v_{ij} \in [0,\delta]$ for $i=1,2,\ldots,\rightdeg$, and that
\begin{eqnarray}
\sum_{i=1}^\rightdeg u_i &=& 0, \label{zerosum2}\\
\sum_{i=2}^\rightdeg |u_i| &\le& \beta. \label{upperbound2}\\
(\forall i) \hspace*{.2in} \sum_{j=1}^s v_{ij} &\le& \delta. \label{rowsum}
\end{eqnarray}
Then 
\[ \sum_{j=1}^s \left|\sum_{i=1}^\rightdeg u_i v_{ij} \right| \le 2\beta \delta. \]
\end{lemma}

Before proving this lemma, we complete the proof of the theorem.

\begin{proof}[Proof of Theorem~\ref{unique-cdo}, entire CDO]
We now apply Lemma~\ref{lem:inequality-matrix} with the same choices of $u_i$ and $\beta$ as before, and with
\[ v_{ij} = (\valu_{[a_{j-1},a_j]}(\rightdeg)-\valu_{[a_{j-1},a_j]}(\rightdeg - i))/i.\]

Observation~\ref{valuesum} implies that Equation~(\ref{rowsum}) is satisfied, and the rest of the assumptions of Lemma~\ref{lem:inequality-matrix}
are satisfied as before.
We therefore obtain:
\[ \| \vectotval(L') - \vectotval(L) \|_1 = \sum_{j=1}^s \left|\sum_{i=1}^\rightdeg u_i v_{ij} \right| \le 2\beta \delta = 2\Delta \lemonsize \delta.\]

Dividing by $\derivsize \derivdeg = \leftdeg \assetsize$ gives the required result.
\end{proof}

We now prove the more general inequality.

\begin{proof}[Proof of Lemma~\ref{lem:inequality-matrix}]
Observe that
\begin{eqnarray*}
\sum_{j=1}^s \left|\sum_{i=1}^\rightdeg u_i v_{ij} \right| &\le& \sum_{i,j} |u_i v_{ij}|\\
&=& \sum_{i:u_i \ge 0} u_i \sum_j v_{ij} + \sum_{i:u_i < 0} |u_i| \sum_j v_{ij}\\
&\le& \beta \delta + \beta \delta.
\end{eqnarray*}
The last inequality comes from the bounds proved in Lemma~\ref{lem:inequality} that
\[ \sum_{i:u_i \ge 0} u_i = \sum_{i:u_i < 0} |u_i| \le \sum_{i>1} |u_i| \le \beta.\]
\end{proof}

\section{Constructive Expanders and CDOs}
\label{sec:explicit}

As stated earlier, despite the fact that
almost all graphs have excellent expansion properties, it is difficult to certify this efficiently for arbitrary graphs.
We therefore use the best known explicit expanders to build our pseudorandom CDOs.
For our purposes, the best explicit expanders are those by Guruswami, Umans, and Vadhan \cite{guv}.
We use Lemma~\ref{simple-unique} to go from expansion close to the degree to unique neighbor expansion.
This section is mostly about choosing the right parameters, and modifying the above graphs to ensure that they are biregular.


\begin{theorem}
\label{cdo}
For any $\alpha \in (0,1]$
and positive integers $\leftsize,\rightsize,\degree,\rightdeg$ such that $\leftsize \degree = \rightsize \rightdeg$,
the following holds for $\Delta = 2(2 \degree)^\alpha (\log_\degree \leftsize)\log_\degree \rightsize$
and any positive integer 
$\kmax \leq (\Delta \rightsize/(8\degree^3))^\alpha$.
There is an explicit pseudorandom
$(\assetsize,\derivsize,\rightdeg)$-CDO family against $\lemonsize$ lemons, for all $\lemonsize \leq \lmax$.
For the tranche $[a,b]$, the error is at most $2\Delta \lemonsize \delta/(\derivsize(b-a))$, and for the entire CDO the error is
at most $4\Delta \lemonsize \delta/(\derivsize \derivdeg)$.
\end{theorem}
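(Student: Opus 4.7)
The plan is to plug an explicit expander from Guruswami--Umans--Vadhan~\cite{guv} into Theorem~\ref{unique-cdo}, using Lemma~\ref{simple-unique} to pass from ordinary expansion to unique-neighbor expansion.

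First I would invoke GUV with expansion slack $\eps = \Delta/\degree$ to obtain an explicit $\degree$-left-regular bipartite graph from $[\leftsize]$ into some right set $\derivs'$ that is a $(\lmax,(1-\eps)\degree) = (\lmax, \degree - \Delta)$-expander. GUV's trade-off between the left degree, the right set size, the threshold $\lmax$, and the expansion slack has the schematic form $|\derivs'| \leq \mathrm{poly}(\degree/\eps) \cdot \lmax^{1+1/\alpha}$. The chosen value $\Delta = 2(2\degree)^\alpha (\log_\degree \leftsize)\log_\degree \rightsize$ is precisely calibrated so that, with $\eps = \Delta/\degree$, the bound $|\derivs'| \leq \rightsize$ holds exactly when $\lmax \leq (\Delta \rightsize/(8\degree^3))^\alpha$; the leading constant $2$ and the $(2\degree)^\alpha$ factor in $\Delta$ are there to absorb the hidden constants in GUV's estimate and its $1+1/\alpha$ exponent.

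Next, since $\leftsize \degree = \rightsize \rightdeg$, I would biregularize the graph on $[\leftsize]\cup[\rightsize]$ by padding the right side to $\rightsize$ vertices and redistributing edges so that each right vertex has degree exactly $\rightdeg$. A standard argument (split each overloaded right vertex into degree-$\rightdeg$ copies, then add dummies up to size $\rightsize$) only duplicates right vertices, so left-neighborhoods can only grow, and the $(\lmax,\degree-\Delta)$-expansion property is preserved. Lemma~\ref{simple-unique} then upgrades this $(\leftdeg,\rightdeg)$-biregular $(\lmax,\degree-\Delta)$-expander into a $(\lmax,\degree - 2\Delta)$-unique-neighbor expander.

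Finally, I would invoke Theorem~\ref{unique-cdo} with the unique-neighbor expansion parameter $2\Delta$ in place of $\Delta$: this immediately yields per-tranche error $2(2\Delta)\lemonsize\delta/(\derivsize(b-a)) = 4\Delta \lemonsize\delta/(\derivsize(b-a))$ and whole-CDO error $3(2\Delta)\lemonsize\delta/(\derivsize\derivdeg) = 6\Delta\lemonsize\delta/(\derivsize\derivdeg)$, as required. The main obstacle is the bookkeeping in the first step: correctly matching the exact form of GUV's size/expansion trade-off to the specific constants $2(2\degree)^\alpha$ and $(\Delta \rightsize/(8\degree^3))^\alpha$ appearing in the statement, and verifying that the biregularization step does not sneak in any additional loss in expansion beyond what is already budgeted.
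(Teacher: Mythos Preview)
Your overall plan matches the paper's exactly: apply GUV, biregularize, use Lemma~\ref{simple-unique} to pass to unique-neighbor expansion, then plug into Theorem~\ref{unique-cdo}; the final arithmetic $2(2\Delta)=4\Delta$ and $3(2\Delta)=6\Delta$ is correct.

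There is, however, a genuine gap in your biregularization step. You start with a $\degree$-left-regular GUV graph on $[\leftsize]\cup\derivs'$ and propose to ``split each overloaded right vertex into degree-$\rightdeg$ copies, then add dummies up to size $\rightsize$.'' But the graph already has $\leftsize\degree=\rightsize\rightdeg$ edges, so after splitting every right vertex has degree at most $\rightdeg$ and the total edge count is still $\rightsize\rightdeg$; hence the number of right vertices after splitting is \emph{at least} $\rightsize$, not at most. You cannot pad with dummies---you have too many vertices, and merging them back would shrink left-neighborhoods and could destroy the expansion you need.

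The paper handles this via Lemma~\ref{biregular} and Corollary~\ref{guv-corollary-biregular}: it invokes GUV with a \emph{smaller} left degree $\degree_0=\degree-\Delta/2$ and a much smaller right side $\rightsize_0=(\Delta/2)\rightsize/\degree$. Now the initial edge count is only $\leftsize\degree_0<\rightsize\rightdeg$, so after splitting one lands strictly below $\rightsize$ right vertices, pads with isolated dummies up to $\rightsize$, and then \emph{adds} the missing $\leftsize(\degree-\degree_0)$ edges to reach biregularity. Adding edges can only help expansion. The cost is that the GUV graph is only a $(\lmax,\degree_0-\Delta_0)$-expander, so the final graph is a $(\lmax,\degree-\Delta)$-expander with $\Delta=\Delta_0+(\degree-\degree_0)=2\Delta_0$; this doubling is exactly the leading factor of $2$ in $\Delta=2(2\degree)^\alpha(\log_\degree\leftsize)\log_\degree\rightsize$, and the shrunken $\rightsize_0$ is what produces the extra $\Delta/\degree$ factor inside $(\Delta\rightsize/(8\degree^3))^\alpha$. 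So the specific constants you were worried about matching are not just GUV bookkeeping---half of them come from this degree-slack trick that your proposal omits.
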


\newcommand{\error}{\eta}
\newcommand{\temp}{s}
To prove this, we use the strong and elegant expander construction of
Guruswami, Umans, and Vadhan \cite{guv}.
We will set parameters in a different order, so we use their Theorem 3.3, obtained before they set parameters.

\begin{theorem}
\label{guv}
\cite{guv}
For $h$ any positive integer, $q$ a power of 2, and $\leftsize$ and $\rightsize$ powers of $q$, there is an explicit construction of an
$(\kmax,q-\Delta)$ expander on $[\leftsize] \cup [\rightsize]$ with left degree $q$, $\kmax = h^{\log_q \rightsize-1}$,
and $\Delta = (h-1) (\log_q \leftsize-1)(\log_q \rightsize-1)$.
\end{theorem}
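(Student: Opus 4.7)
The plan is to instantiate the Guruswami--Umans--Vadhan construction (Theorem~\ref{guv}) so that the resulting bipartite graph, after the generic conversion of Lemma~\ref{simple-unique}, is a unique-neighbor expander of exactly the strength needed to feed into Theorem~\ref{unique-cdo}. Since Theorem~\ref{unique-cdo} yields errors $2\Delta'\lemonsize\delta/(\derivsize(b-a))$ for a tranche and $3\Delta'\lemonsize\delta/(\derivsize\derivdeg)$ for the entire CDO when the underlying graph is a $(\leftdeg,\rightdeg)$-biregular $(\lemonsize,\leftdeg-\Delta')$-unique-neighbor expander, and Lemma~\ref{simple-unique} loses a factor of~$2$ in the expansion slack, the target errors $4\Delta$ and $6\Delta$ from the theorem statement translate into the budget $\Delta_{\mathrm{GUV}} \le \Delta$ on the expansion slack coming out of Theorem~\ref{guv}.

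Concretely, I would apply Theorem~\ref{guv} with $q=\degree$ and with the integer parameter $h$ chosen so that $h-1 \le 2(2\degree)^{\alpha}$; then
\[
\Delta_{\mathrm{GUV}} = (h-1)(\log_\degree \leftsize - 1)(\log_\degree \rightsize - 1) \le 2(2\degree)^{\alpha}(\log_\degree \leftsize)(\log_\degree \rightsize) = \Delta,
\]
and Lemma~\ref{simple-unique} turns the $(\kmax_{\mathrm{GUV}},\degree-\Delta_{\mathrm{GUV}})$-expander into a $(\kmax_{\mathrm{GUV}},\degree-2\Delta_{\mathrm{GUV}})$-unique-neighbor expander. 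Plugging into Theorem~\ref{unique-cdo} yields tranche error $2(2\Delta_{\mathrm{GUV}})\lemonsize\delta/(\derivsize(b-a)) \le 4\Delta\lemonsize\delta/(\derivsize(b-a))$ and total error $3(2\Delta_{\mathrm{GUV}})\lemonsize\delta/(\derivsize\derivdeg) \le 6\Delta\lemonsize\delta/(\derivsize\derivdeg)$, as required.

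The step I expect to be the main obstacle is verifying that, with $h$ chosen as large as possible subject to $h-1 \le 2(2\degree)^{\alpha}$ (essentially $h \approx (2\degree)^{\alpha}$), the GUV bound
\[
\kmax_{\mathrm{GUV}} = h^{\log_\degree \rightsize - 1}
\]
is at least the claimed $\lmax = (\Delta \rightsize/(8\degree^3))^{\alpha}$. Writing $h^{\log_\degree \rightsize - 1} = h^{\log_\degree \rightsize}/h$ and using $h \ge (2\degree)^{\alpha}$ gives $h^{\log_\degree \rightsize} \ge \rightsize^{\alpha}\cdot 2^{\alpha \log_\degree \rightsize}$; substituting the definition of $\Delta$ and absorbing the polylogarithmic factors $(\log_\degree \leftsize)^{\alpha}(\log_\degree \rightsize)^{\alpha}$ against the $1/h$ loss and the $1/(8\degree^{3})^{\alpha}$ slack in $\lmax$ should close the inequality. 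This is essentially a bookkeeping computation that pins down how the constants $2$ and $8\degree^{3}$ in the statement arise; I would do it by taking $\log_\degree$ of both sides and checking a single linear inequality in $\log_\degree \rightsize$.

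Finally, a minor but genuine point is biregularity: Theorem~\ref{guv} guarantees left-regularity with left degree $q=\degree$, while Theorem~\ref{unique-cdo} is stated for $(\degree,\rightdeg)$-biregular graphs with $\leftsize\degree=\rightsize\rightdeg$. I would note that the GUV graph can be made right-regular of degree $\rightdeg$ by the standard trick of identifying right vertices in blocks (or, equivalently, by using the Parvaresh--Vardy alphabet-size balancing inside their construction), and that neither the expansion parameter $\Delta_{\mathrm{GUV}}$ nor the unique-neighbor count $|\Gamma_1(S)|$ can decrease under this identification. With that observed, the chain Theorem~\ref{guv} $\to$ Lemma~\ref{simple-unique} $\to$ Theorem~\ref{unique-cdo} delivers the claimed explicit pseudorandom CDO family.
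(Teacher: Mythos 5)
Your proposal does not prove the statement it was assigned. The statement is Theorem~\ref{guv}, the Guruswami--Umans--Vadhan expander construction itself; in the paper this is an imported result, cited as Theorem~3.3 of \cite{guv}, and the paper offers no proof of it. A genuine proof would have to reconstruct the GUV machinery: left vertices identified with low-degree polynomials over $\mathbb{F}_q$, neighborhoods defined by Parvaresh--Vardy-style evaluations, and a list-decoding argument showing that every set of at most $h^{\log_q \rightsize - 1}$ left vertices has at least $(q-\Delta)|S|$ neighbors with $\Delta = (h-1)(\log_q \leftsize - 1)(\log_q \rightsize - 1)$. None of that appears in your write-up. What you sketched instead is the derivation of Theorem~\ref{cdo} \emph{from} Theorem~\ref{guv} --- i.e., the content of Corollary~\ref{guv-corollary}, Lemma~\ref{biregular}, Corollary~\ref{guv-corollary-biregular}, and the final combination with Lemma~\ref{simple-unique} and Theorem~\ref{unique-cdo}. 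Using the theorem as a black box is the opposite of proving it, so as an answer to the assigned statement the proposal is empty.

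Even judged as a sketch of Theorem~\ref{cdo}, one step is wrong and one is unavailable. You propose to achieve right-regularity by ``identifying right vertices in blocks'' and assert that neither $|\Gamma(S)|$ nor $|\Gamma_1(S)|$ can decrease under this identification; but merging right vertices can collapse two distinct neighbors of $S$ into one, so both quantities can strictly decrease and expansion can be destroyed. The paper's Lemma~\ref{biregular} goes in the opposite direction: it \emph{splits} right vertices of high degree into several vertices (which never decreases neighbor counts), pads with isolated right vertices, and then adds edges, which also cannot hurt left-expansion. Separately, Theorem~\ref{guv} requires $q$ to be a prime power and $\leftsize,\rightsize$ to be powers of $q$, so your instantiation ``$q=\degree$'' is not available for general $\degree$; rounding $\degree$ up to a power of two and trimming via Observation~\ref{removal} is precisely what Corollary~\ref{guv-corollary} does, and it is where the constants $(2\degree)^\alpha$ and $4\degree^2$ (and, after the biregular conversion, $8\degree^3$) actually come from.
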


Before setting parameters, we need the following simple observation.

\begin{observation}
\label{removal}
Suppose we are given
an $(\kmax,\degree - \Delta)$ expander with left-degree~$\degree$.
If we remove any left vertices, and add any right vertices, the graph remains an $(\kmax,\degree - \Delta)$ expander.
If for each left vertex, we remove an arbitrary $\degree - \degree'$ edges, then the graph becomes
an $(\kmax,\degree' - \Delta)$ expander with left degree~$\degree'$.
\end{observation}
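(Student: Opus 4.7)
The plan is to verify the two claims separately by direct counting, using the fact that expansion is a lower bound on $|\Gamma(S)|$ for every subset $S$ of size at most $\kmax$.

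For the first claim, I would fix an arbitrary subset $S$ of size at most $\kmax$ in the new left vertex set. The modification only deletes left vertices outside $S$ and appends fresh right vertices that by assumption carry no incident edges. Hence no edge incident to $S$ is added or removed, and $\Gamma(S)$ is literally the same set of right vertices as in the original graph. The $(\kmax, \degree-\Delta)$-expansion bound on $S$ therefore carries over verbatim.

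For the second claim, I would again fix a subset $S$ of size $\ksize \le \kmax$ on the left, and let $G''$ denote the graph obtained after trimming $\degree-\degree'$ edges at each left vertex. Exactly $(\degree-\degree')\ksize$ edges incident to $S$ are removed. The key combinatorial observation is that a right vertex $v \in \Gamma_G(S)$ can fall out of $\Gamma_{G''}(S)$ only if every one of its edges into $S$ was among those removed, so we may charge each lost right neighbor to a distinct removed edge incident to $S$. This gives
\[
|\Gamma_G(S) \setminus \Gamma_{G''}(S)| \le (\degree-\degree')\ksize,
\]
and combining with $|\Gamma_G(S)| \ge (\degree-\Delta)\ksize$ yields $|\Gamma_{G''}(S)| \ge (\degree'-\Delta)\ksize$. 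Left-regularity with degree $\degree'$ is immediate from the construction.

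There is no genuine obstacle here; the only point that benefits from being stated carefully is the edge-versus-vertex charging in the second part, so I would write out the one-edge-per-lost-neighbor bookkeeping explicitly rather than leave it implicit. Everything else is a one-line appeal to the hypothesized expansion of the original graph.
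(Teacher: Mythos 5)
Your proof is correct; the paper treats this observation as immediate and supplies no proof at all, and your direct verification -- noting that $\Gamma(S)$ is unchanged in the first case, and charging each lost right neighbor to a distinct removed edge incident to $S$ in the second, giving $|\Gamma_{G''}(S)| \ge (\degree-\Delta)|S| - (\degree-\degree')|S| = (\degree'-\Delta)|S|$ -- is exactly the intended justification. Nothing further is needed.
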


We now set parameters from Theorem~\ref{guv} as follows.

\begin{corollary}
\label{guv-corollary}
For any $\alpha \in (0,1]$
and positive integers $\leftsize$, $\rightsize$, and $\degree$,
there is an explicit construction of
an $(\kmax,\degree - \Delta)$ expander on $[\leftsize] \cup [\rightsize]$ with left degree $\degree$ for
$\kmax = (\rightsize/(4\degree^2))^\alpha$ and
$\Delta = (2 \degree)^\alpha (\log_\degree \leftsize)\log_\degree \rightsize$.
\end{corollary}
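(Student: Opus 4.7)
The plan is to instantiate Theorem~\ref{guv} with carefully chosen parameters and then invoke Observation~\ref{removal} to adjust sizes and degree to the values required by the corollary. First, by Bertrand's postulate, I choose a prime $q$ with $\degree \leq q \leq 2\degree$. Theorem~\ref{guv} requires $\leftsize$ and $\rightsize$ to be powers of $q$; since Observation~\ref{removal} only lets us delete left vertices and add right vertices, I round $\leftsize$ \emph{up} to $\leftsize' = q^{\lceil \log_q \leftsize \rceil}$ (so $\leftsize \leq \leftsize' \leq q\leftsize$) and $\rightsize$ \emph{down} to $\rightsize' = q^{\lfloor \log_q \rightsize \rfloor}$ (so $\rightsize/q \leq \rightsize' \leq \rightsize$). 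For the remaining parameter of Theorem~\ref{guv}, I take $h = \lceil q^\alpha \rceil$.

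Plugging $(q, h, \leftsize', \rightsize')$ into Theorem~\ref{guv} yields a $(\kmax', q - \Delta')$-expander of left-degree $q$ on $[\leftsize'] \cup [\rightsize']$, where $\kmax' = h^{\log_q \rightsize' - 1}$ and $\Delta' = (h-1)(\log_q \leftsize' - 1)(\log_q \rightsize' - 1)$. I then apply Observation~\ref{removal} in three steps: remove $\leftsize' - \leftsize$ left vertices, add $\rightsize - \rightsize'$ right vertices, and drop $q - \degree$ edges from each remaining left vertex. The outcome is a $(\kmax', \degree - \Delta')$-expander of left-degree $\degree$ on $[\leftsize] \cup [\rightsize]$, which is also a $(\kmax, \degree - \Delta)$-expander provided $\kmax' \geq \kmax$ and $\Delta' \leq \Delta$.

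To verify the two inequalities, note that since $h \geq q^\alpha$, we have $\kmax' \geq q^{\alpha(\log_q \rightsize' - 1)} = (\rightsize'/q)^\alpha \geq (\rightsize/q^2)^\alpha \geq (\rightsize/(4\degree^2))^\alpha$, where the last step uses $q \leq 2\degree$. For $\Delta'$, observe $h - 1 \leq q^\alpha \leq (2\degree)^\alpha$, and since $q \geq \degree$ both $\log_q \leftsize' - 1 \leq \log_q \leftsize \leq \log_\degree \leftsize$ and $\log_q \rightsize' - 1 \leq \log_q \rightsize \leq \log_\degree \rightsize$ hold; multiplying gives $\Delta' \leq (2\degree)^\alpha (\log_\degree \leftsize) \log_\degree \rightsize$.

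The only spot calling for care is the rounding direction: Observation~\ref{removal} forces us to round $\leftsize$ up and $\rightsize$ down, which is what produces the slackness factors (the $q^2 \leq 4\degree^2$ inside $\kmax$, and the absorption of the ``$-1$'' terms into $\log_\degree$ for $\Delta$). Everything else is a direct parameter plug-in; there is no serious obstacle once $q$ is chosen close to $\degree$ via Bertrand and $h$ is chosen to be roughly $q^\alpha$.
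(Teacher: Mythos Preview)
Your proof is correct and follows essentially the same approach as the paper: instantiate Theorem~\ref{guv} with $h=\lceil q^\alpha\rceil$, round $\leftsize$ up and $\rightsize$ down to powers of $q$, then apply Observation~\ref{removal} and verify the same two inequalities via the same chain of estimates. The only cosmetic difference is that you pick $q$ to be a prime in $[\degree,2\degree]$ via Bertrand, whereas the paper takes $q$ to be the smallest power of $2$ at least $\degree$; both choices give a prime power with $\degree\le q\le 2\degree$, which is all the argument needs.
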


\begin{proof}
Let $q$ be the smallest power of 2 that is at least $\degree$.
Let $\leftsize'$ be the smallest power of $q$ at least $\leftsize$, and let $\rightsize'$ be the largest power of $q$ at most $\rightsize$.
By Observation~\ref{removal}, it suffices to construct a $(\kmax,q - \Delta)$ expander on $[\leftsize'] \cup [\rightsize']$ with left-degree $q$.
Set $h=\ceil{q^\alpha}$ and $\ell =\log_q \rightsize' = \floor{\log_q \rightsize}$,
so $q^\ell \leq \rightsize < q^{\ell + 1}$.
We use the expander constructed in Theorem~\ref{guv}.  It suffices to lower bound $\kmax$ and upper bound $\Delta$.  We get:
\[ \kmax \geq h^{\ell - 1} \geq q^{\alpha (\ell - 1)} > (\rightsize/q^2)^\alpha \geq (\rightsize/(4\degree^2))^\alpha,\] and
\begin{eqnarray*}
\Delta &\leq& (h-1) (\log_q \leftsize' - 1)(\log_q \rightsize' - 1)\\
&<& q^\alpha (\log_q \leftsize)\log_q \rightsize < (2\degree)^\alpha (\log_\degree \leftsize)\log_\degree \rightsize.
\end{eqnarray*}

This completes the proof.
\end{proof}

This and other known unique-neighbor expander constructions
give left-regular graphs.  However, we need the graph to be biregular.
We show how to convert a left-regular graph to biregular while increasing the left-degree only slightly, at the expense of increasing the number of right vertices.
The following extends a lemma \mbox{from \cite{glr}.}

\begin{lemma}
\label{biregular}
Suppose we are given
a $\degree_0$-left-regular $(\kmax,\expansion)$ expander on $[\leftsize] \cup [\rightsize_0]$,
and parameters $\rightsize,\degree,\rightdeg$ such that $\leftsize \degree = \rightsize \rightdeg$, $\degree_0 < \degree \leq \rightsize_0$, 
and $\rightsize \geq \rightsize_0 \degree / (\degree - \degree_0)$.
We can efficiently construct
a $(\degree,\rightdeg)$-biregular $(\kmax,\expansion)$ expander on $[\leftsize] \cup [\rightsize]$.
\end{lemma}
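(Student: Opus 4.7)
The key observation is that the expansion condition $|\Gamma(S)| \geq \expansion |S|$ for $|S| \leq \kmax$ is monotone: adding right vertices, adding edges, or splitting a right vertex into several copies can only grow $|\Gamma(S)|$. So the plan is to build the desired $(\degree,\rightdeg)$-biregular expander from the given $\degree_0$-left-regular one purely by such operations, never deleting anything; expansion is then preserved for free, and the only real work is to engineer the degree sequence on $[\leftsize]\cup[\rightsize]$.

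I will proceed in three steps. First, to prevent any right vertex from ending up with degree exceeding $\rightdeg$, I replace each original right vertex $v$ of degree $d_v$ by $\lceil d_v/\rightdeg \rceil$ copies, partitioning $v$'s incident edges among them so that every copy has degree at most $\rightdeg$. This vertex split preserves left-degrees and only grows $|\Gamma(S)|$, so expansion survives. The number of right vertices afterwards is at most
\[
\sum_v \lceil d_v/\rightdeg \rceil \;\leq\; \frac{\leftsize \degree_0}{\rightdeg} + \rightsize_0 \;=\; \frac{\rightsize \degree_0}{\degree} + \rightsize_0,
\]
using $\leftsize \degree = \rightsize \rightdeg$. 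This quantity is at most $\rightsize$ iff $\rightsize \geq \rightsize_0 \degree/(\degree - \degree_0)$, which is precisely the hypothesis of the lemma. I then pad with isolated right vertices to reach exactly $\rightsize$ right vertices.

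Second, I add edges so that every left vertex reaches degree $\degree$ and every right vertex reaches degree $\rightdeg$. Left and right demand totals both equal $\leftsize(\degree - \degree_0)$, so this is a bipartite degree-sequence realization problem, with the extra constraint of not reusing pairs already present as edges. Each left vertex forbids only its $\degree_0$ existing neighbors out of the $\rightsize - \degree_0$ available right vertices, and each right vertex needs at most $\rightdeg$ further edges with $\leftsize$ left vertices to choose from; feasibility and a polynomial-time realization follow from a standard Gale--Ryser-type augmenting-path argument. The main obstacle is really the first step's counting: the splitting must fit inside $\rightsize$ right vertices, and this is exactly what the hypothesis $\rightsize \geq \rightsize_0 \degree/(\degree - \degree_0)$ is tuned to guarantee; after that, the remaining construction is routine bipartite bookkeeping.
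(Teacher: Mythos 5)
Your proposal is correct and follows essentially the same route as the paper: split each right vertex of excess degree into $\lceil \rightdeg_v/\rightdeg\rceil$ copies, verify via $\leftsize\degree=\rightsize\rightdeg$ that the hypothesis $\rightsize \geq \rightsize_0\degree/(\degree-\degree_0)$ makes the resulting vertex count fit within $\rightsize$, pad with isolated right vertices, and then top up all degrees with new edges, using monotonicity of left-set expansion under these operations. The only (cosmetic) difference is in the last step, where the paper avoids multi-edges by an explicit cyclic scheduling of edge slots while you appeal to a generic Gale--Ryser/augmenting-path realization argument; both are at a comparable level of detail.
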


\begin{proof}
Let $\rightdeg_0=\leftsize \degree_0/\rightsize_0$ denote the original average right degree.
For any right node $v \in [\rightsize_0]$ of degree $\rightdeg_v > \rightdeg$, divide it into $\ceil{\rightdeg_v/\rightdeg}$ vertices,
where $\floor{\rightdeg_v/\rightdeg}$ have degree $\rightdeg$ and at most one has degree less than $\rightdeg$.
(Partition neighbors arbitrarily.)

The number of new nodes added is at most
\[ \sum_{v \in [\rightsize_0]} \left( \left \lceil \frac{\rightdeg_v}{\rightdeg} \right \rceil - 1 \right ) < \sum_{v \in [\rightsize_0]} \frac{\rightdeg_v}{\rightdeg} =
\frac{\rightsize_0\rightdeg_0}{\rightdeg} = \frac{\leftsize \degree_0 \rightsize}{\leftsize \degree} = \frac{\degree_0 \rightsize}{\degree}.
\]

Thus, the total number of right nodes is less than 
\[ \rightsize_0 + \frac{\degree_0}{\degree} \rightsize
\leq \frac{\degree - \degree_0}{\degree} \rightsize + \frac{\degree_0}{\degree} \rightsize = \rightsize.\]

Add isolated nodes to the right to make the total number of right nodes exactly $\rightsize$.
Now add edges arbitrarily to the right and left to make all left degree $\degree$ and right degrees $\rightdeg$, which is possible because
$\leftsize \degree = \rightsize \rightdeg$.  Naively, this may allow multiple edges, but we can avoid this by filling edge slots
in the following order.  For left nodes, cycle over all nodes $\degree - \degree_0$ times, filling one edge slot each time.
For right nodes, cycle over all nodes once, filling all edge slots for a node before proceeding to the next node.
\end{proof}

\begin{corollary}
\label{guv-corollary-biregular}
For any $\alpha \in (0,1]$
and positive integers $\leftsize$, $\rightsize$, $\degree$, and $\rightdeg$ such that $\leftsize \degree = \rightsize \rightdeg$,
there is an explicit construction of a $(\degree,\rightdeg)$-biregular
$(\kmax,\degree - \Delta)$ expander on $[\leftsize] \cup [\rightsize]$ for
$\Delta = 2(2 \degree)^\alpha (\log_\degree \leftsize)\log_\degree \rightsize$ and
$\kmax = (\Delta \rightsize/(8\degree^3))^\alpha$.
\end{corollary}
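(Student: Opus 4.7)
The plan is to apply Corollary~\ref{guv-corollary} with a slightly reduced left-degree $\degree_0 < \degree$ and a correspondingly smaller right side $\rightsize_0 < \rightsize$, obtaining a $\degree_0$-left-regular expander, and then invoke Lemma~\ref{biregular} to convert it into a $(\degree,\rightdeg)$-biregular expander on the full vertex set $[\leftsize]\cup[\rightsize]$. Since the biregular lemma preserves the expansion parameter, if the preliminary graph is an $(\kmax_0,\degree_0-\Delta_0)$-expander, the output is an $(\kmax_0,\degree-\Delta')$-expander with $\Delta' := (\degree-\degree_0) + \Delta_0$. The task reduces to choosing $\degree_0$ and $\rightsize_0$ so that $\kmax_0 \geq \kmax$ and $\Delta' \leq \Delta$, matching the statement.

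I would take $\degree_0 = \degree - \lceil\Delta/2\rceil$ and $\rightsize_0 = \lfloor\rightsize(\degree-\degree_0)/\degree\rfloor$, so that the hypothesis $\rightsize \geq \rightsize_0\degree/(\degree-\degree_0)$ of Lemma~\ref{biregular} holds by construction. The case $\Delta \geq \degree$ is handled separately: the expansion condition is vacuous and any $(\degree,\rightdeg)$-biregular bipartite graph on $[\leftsize]\cup[\rightsize]$ (which exists since $\leftsize\degree = \rightsize\rightdeg$) suffices. Otherwise $\degree_0 \geq \degree/2$, which both keeps the log-base conversions controlled and ensures Corollary~\ref{guv-corollary} applies nontrivially. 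One also checks $\rightsize_0 \geq \degree$, which holds within the regime where the stated $\kmax$ is positive.

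Plugging into Corollary~\ref{guv-corollary} with parameters $(\leftsize,\rightsize_0,\degree_0,\alpha)$ yields $\kmax_0 = (\rightsize_0/(4\degree_0^2))^\alpha$ and $\Delta_0 = (2\degree_0)^\alpha(\log_{\degree_0}\leftsize)(\log_{\degree_0}\rightsize_0)$. The $\kmax$ bound follows directly: substituting $\rightsize_0 \approx \rightsize\Delta/(2\degree)$ and $\degree_0 \leq \degree$ gives $\kmax_0 \geq (\Delta\rightsize/(8\degree^3))^\alpha = \kmax$. For $\Delta'$, since $\degree-\degree_0 \leq \Delta/2 + 1$, it suffices to show $\Delta_0 \leq \Delta/2 - O(1)$; here $(2\degree_0)^\alpha \leq (2\degree)^\alpha$ and $\log_{\degree_0}\rightsize_0 \leq \log_{\degree_0}\rightsize$ are favorable, while switching each log base from $\degree_0$ to $\degree$ inflates it by at most $\log\degree/\log\degree_0 \leq 2$ (using $\degree_0 \geq \degree/2$). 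The prefactor of $2$ in the statement's $\Delta = 2(2\degree)^\alpha\cdots$, relative to Corollary~\ref{guv-corollary}'s expression, provides precisely the slack needed to absorb this inflation.

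The main obstacle is the constant tracking in the $\Delta' \leq \Delta$ step: the log-base conversion can inflate each log factor by up to $2$, and one must verify that the factor-of-$2$ buffer in the stated $\Delta$ together with the $\rightsize_0$-shrinkage absorbs this cost along with the additive $\degree - \degree_0 \approx \Delta/2$ term. If the bookkeeping turns out to be borderline, a more conservative choice such as $\degree_0 = \lceil\degree/2\rceil$ and $\rightsize_0 = \lfloor\rightsize/2\rfloor$ eliminates the log-base subtlety at the cost of slightly weaker $\kmax_0$; the loose constants in the statement appear to accommodate either route.
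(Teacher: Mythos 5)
Your proposal matches the paper's own proof essentially exactly: the paper sets $\Delta_0 = \Delta/2$, $\degree_0 = \degree - \Delta_0$, and $\rightsize_0 = \Delta_0\rightsize/\degree$ (so the hypothesis of Lemma~\ref{biregular} holds with equality), applies Corollary~\ref{guv-corollary} with these parameters to obtain a $(\kmax,\degree_0-\Delta_0')$ expander with $\kmax' \geq \kmax$ and $\Delta_0' \leq \Delta_0$, and then invokes Lemma~\ref{biregular}. Your extra attention to integrality and to the $\log_{\degree_0}$-versus-$\log_\degree$ base conversion is bookkeeping the paper silently elides, not a different route.
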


\begin{proof}
Set $\Delta_0 = \Delta/2$, $\degree_0 = \degree - \Delta_0$, and $\rightsize_0 = \Delta_0 \rightsize/\degree$.  By Corollary~\ref{guv-corollary},
there is an explicit construction of a $(\kmax,\degree_0 - \Delta_0')$ expander on $[\leftsize] \cup [\rightsize_0]$ with left degree $\degree_0$ for
$\kmax' = (\rightsize_0/(4\degree_0^2))^\alpha \geq \kmax$ and
$\Delta_0' = (2 \degree_0)^\alpha (\log_\degree \leftsize)\log_\degree \rightsize_0 \leq \Delta_0$.
Now apply Lemma~\ref{biregular}.
\end{proof}

Combining Corollary~\ref{guv-corollary-biregular} and Lemma~\ref{simple-unique} with Theorem~\ref{unique-cdo} yields Theorem~\ref{cdo}.

If $\degree$ is smaller, we could use the expanders of \cite{crvw}, but the degree is not as good a function in the error and our results are not as~strong.

We remark that our required notion of explicitness is weaker than one often considered in the literature and achieved in the above constructions.
We just need the whole graph to be efficiently constructible, whereas sometimes one needs the $i$th neighbor of a node to be computable very quickly,
say in time polylogarithmic in the number of nodes.

\section{Two General Assets}
\label{sec:general}

In this section we obtain bounds even if the probability distribution of lemons is not stochastically dominated by the probability distribution of good assets.
We only assume that $\mu \geq \lambda$, where $\mu$ and $\lambda$ are the expected values of each good asset and lemon, respectively.
We don't need to think of the second asset as a lemon; instead consider two general assets with expected values $\mu \geq \lambda$.
Of course, in this more general setting our results our weaker.
Now, even the ``trivial" bounds change; such bounds can be deduced from our bounds below.
We show that in the general case,
the $\delta=\mu-\lambda$ in Theorem~\ref{unique-cdo} must be replaced by $\mu$ for the $[a,b]$ tranche.  Moreover, we no longer get better bounds for
the entire CDO than can be deduced from the bounds in the tranches.
On the other hand, replacing $\delta$ by $\mu$ is not too bad if $\lambda$ is much smaller than $\mu$.

\begin{theorem}
\label{unique-cdo-gen}
A CDO built from a $(\leftdeg,\rightdeg)$-biregular $(\lemonsize,\leftdeg - \Delta)$-unique neighbor expander is
pseudorandom for $\lemonsize$ lemons.
For the tranche $[a,b]$, the error is at most $\Delta \lemonsize \mu/(\derivsize(b-a))$.
\end{theorem}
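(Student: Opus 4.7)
The plan is to follow the proof of Theorem~\ref{unique-cdo}, replacing the monotonicity of $\valu(\cdot)$ (Lemma~\ref{nondecreasing}) and the one-sided estimate of Corollary~\ref{valuesumcor} by a two-sided Lipschitz bound that only uses $\mu \geq \lambda$. Specifically, I would first prove $|\valu_{[a,b]}(g+1) - \valu_{[a,b]}(g)| \leq \mu$, and hence $|\valu_{[a,b]}(g+i) - \valu_{[a,b]}(g)| \leq i\mu$ by telescoping, and then re-run the bound of Lemma~\ref{valuediff} and the main argument of Theorem~\ref{unique-cdo} while tracking the sign of $\valu(\rightdeg) - \valu(\rightdeg - i)$, which is no longer guaranteed to be nonnegative.

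For the Lipschitz estimate, I would condition on $\globe = z$ and on the payoffs of the other $\rightdeg - 1$ assets underlying a given CDO, summing to some value $Y$. Passing from $g$ to $g+1$ good assets amounts to replacing one lemon draw $L' \sim D_\ell(z)$ by an independent good draw $G \sim D_g(z)$, while holding $Y$ fixed. Since $\val_{[a,b]}$ is nondecreasing and $1$-Lipschitz and both $G, L' \geq 0$,
\[
\val(Y+G) - \val(Y+L') \;\leq\; \val(Y+G) - \val(Y) \;\leq\; G,
\]
and symmetrically $\val(Y+L') - \val(Y+G) \leq L'$. Taking expectations over $(G, L', Y)$ yields $\valu(g+1) - \valu(g) \leq \mu$ and $\valu(g) - \valu(g+1) \leq \lambda \leq \mu$, establishing the one-step bound; telescoping gives the multi-step version.

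Next I would rerun the main argument. Setting $\phi(i) := \valu(\rightdeg) - \valu(\rightdeg - i)$ so that $|\phi(i)| \leq i\mu$, the identity $\totval(L') - \totval(L) = \sum_i (\badthresh_i(L) - \badthresh_i(L')) \phi(i)$ is unchanged. The $i = 0$ term vanishes, and the $i = 1$ term contributes at most $|\badthresh_1(L) - \badthresh_1(L')| \cdot |\phi(1)| \leq \Delta \lemonsize \cdot \mu$ by unique-neighbor expansion $\badthresh_1(\cdot) \geq (\leftdeg - \Delta)\lemonsize$. For $i \geq 2$ I split by the sign of $\phi(i)$: on indices with $\phi(i) \geq 0$ the nonpositive summand $-\badthresh_i(L')\phi(i)$ may be dropped, giving a bound of $\badthresh_i(L) \cdot i\mu$; on indices with $\phi(i) < 0$ the symmetric term $\badthresh_i(L)\phi(i) \leq 0$ is dropped, giving a bound of $\badthresh_i(L') \cdot i\mu$. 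Summing and using the edge-counting bound $\sum_{i \geq 2} i \badthresh_i(\cdot) \leq \Delta \lemonsize$ (which follows from $\sum_i i \badthresh_i = \leftdeg \lemonsize$ together with $\badthresh_1 \geq (\leftdeg - \Delta)\lemonsize$) controls the $i \geq 2$ contribution. Normalizing by $\derivsize(b-a)$ yields the stated error for the $[a,b]$ tranche.

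The main obstacle is the loss of monotonicity of $\valu$: in the original proof one could uniformly drop the $-\badthresh_i(L')\phi(i)$ terms because they were nonpositive, but in the general case their sign is unconstrained. The sign-splitting trick above is what recovers an estimate of the same shape without much slack, and the Lipschitz bound on $\valu$ is the new technical ingredient that forces $\delta$ to be replaced by $\mu$. The observation that $\valu$ has one-sided Lipschitz constants $\mu$ (going up) and $\lambda$ (going down), both separately derivable from the coupling, is what makes the two-sided bound $|\valu(g+1) - \valu(g)| \leq \mu$ tight in the absence of dominance.
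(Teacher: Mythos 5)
Your Lipschitz estimate is exactly the paper's new ingredient: Lemma~\ref{valuechange} states $|\valu_{[a,b]}(g+i)-\valu_{[a,b]}(g)|\leq i\mu$ and proves it by the same conditioning idea (the paper conditions on the payoff $x$ of the other $\rightdeg-1$ assets and integrates $\Pr[Y>y]$ to get $\mu$; your coupling with the $1$-Lipschitz, nondecreasing $\val_{[a,b]}$ is equivalent). For the rest, the paper literally says ``proceed as in the earlier proof, replacing Corollary~\ref{valuesumcor} with Lemma~\ref{valuechange}; everything else goes through,'' which silently skips the point you correctly flag: the inequality in Lemma~\ref{valuediff} drops the terms $-\badthresh_i(L')\phi(i)$ only because monotonicity (Lemma~\ref{nondecreasing}) makes them nonpositive. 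So your diagnosis of the obstacle is right, and you are being more careful than the paper here.

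The problem is that your sign-splitting, as accounted for, proves the theorem with constant $3$ rather than the stated $2$. Your $i=1$ term costs up to $\Delta\lemonsize\mu$, while your $i\geq 2$ bound is $\sum_{i\in I} i\badthresh_i(L)\mu+\sum_{i\notin I} i\badthresh_i(L')\mu$ with $I=\{i\geq 2:\phi(i)\geq 0\}$; the two partial sums run over disjoint index sets but of \emph{different} functions, so each is separately bounded by $\Delta\lemonsize\mu$ and together they can reach $2\Delta\lemonsize\mu$. Total: $3\Delta\lemonsize\mu$. To recover the constant $2$ you need the extra observation that the $i=1$ term and the $i\geq2$ terms cannot be simultaneously large, because they are linked through the edge count. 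Set $A=\sum_{i\geq 2}i\badthresh_i(L)$ and $B=\sum_{i\geq 2}i\badthresh_i(L')$; since $\sum_{i\geq 1}i\badthresh_i(L)=\sum_{i\geq 1}i\badthresh_i(L')=\leftdeg\lemonsize$, you get $\badthresh_1(L)-\badthresh_1(L')=B-A$ exactly, not merely $|\badthresh_1(L)-\badthresh_1(L')|\leq\Delta\lemonsize$. Your three contributions are then at most
\[
\mu\bigl(|B-A|+A+B\bigr)=2\mu\max(A,B)\leq 2\Delta\lemonsize\mu,
\]
using $A,B\leq\Delta\lemonsize$ from unique-neighbor expansion. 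With that one extra line your argument closes the gap and, unlike the paper's one-sentence proof, actually justifies the claimed bound.
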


The only place we used stochastic domination was to prove
Lemma~\ref{nondecreasing}, and hence deduce Corollary~\ref{valuesumcor}.
Corollary~\ref{valuesumcor}  may no longer hold, but we prove an analog with $\delta$ replaced by $\mu$.

\begin{lemma}
\label{valuechange}
For any $g$, $i$, and tranche $[a,b]$, we have
$|\valu_{[a,b]}(g+i) - \valu_{[a,b]}(g)| \leq i \mu$.
\end{lemma}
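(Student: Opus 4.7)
The plan is to reduce to the case $i=1$ by a telescoping hybrid and bound each single-asset change by $\mu$ in absolute value. To handle one flip, I will couple the $g$-good and $(g+1)$-good portfolios so that they agree in $\derivdeg-1$ positions and differ only in the remaining one. Writing $Y$ for the sum of the shared $\derivdeg-1$ assets and letting the flipped position contribute $L\sim D_\ell(z)$ in the first portfolio and $G\sim D_g(z)$ in the second (where by the $\rightdeg$-wise independence assumption these are, conditioned on $\globe=z$, independent of $Y$ and of each other), one has
\[
\valu_{[a,b]}(g+1)-\valu_{[a,b]}(g)=\expect\!\left[\val_{[a,b]}(Y+G)-\val_{[a,b]}(Y+L)\right].
\]

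The key observation is that $\val_{[a,b]}(x)=\min(x,b)-\min(x,a)$ is nondecreasing and $1$-Lipschitz in $x$, and $G,L\in[0,1]$ with $Y\geq 0$. Splitting the integrand through $\val_{[a,b]}(Y)$ gives the two one-sided bounds: using monotonicity on the $L$-term and Lipschitzness on the $G$-term,
\[
\val_{[a,b]}(Y+G)-\val_{[a,b]}(Y+L)\;\leq\;\val_{[a,b]}(Y+G)-\val_{[a,b]}(Y)\;\leq\;G,
\]
while the symmetric choice yields
\[
\val_{[a,b]}(Y+G)-\val_{[a,b]}(Y+L)\;\geq\;\val_{[a,b]}(Y)-\val_{[a,b]}(Y+L)\;\geq\;-L.
\]
Taking expectations (first inside a fixed $\globe=z$, then averaging over $z$) produces $-\lambda\leq \valu_{[a,b]}(g+1)-\valu_{[a,b]}(g)\leq \mu$, and the hypothesis $\lambda\leq\mu$ collapses this to $|\valu_{[a,b]}(g+1)-\valu_{[a,b]}(g)|\leq\mu$. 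The lemma then follows by the triangle inequality:
\[
|\valu_{[a,b]}(g+i)-\valu_{[a,b]}(g)|\;\leq\;\sum_{j=0}^{i-1}|\valu_{[a,b]}(g+j+1)-\valu_{[a,b]}(g+j)|\;\leq\;i\mu.
\]

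The main subtlety (and the reason Corollary~\ref{valuesumcor} required the stronger $\delta$-bound under dominance) is the lower bound for a single flip: without a coupling witnessing $G\geq L$ pointwise, we cannot conclude the one-step change is nonnegative, so the interval $[-\lambda,\mu]$ is the best we can hope for. The assumption $\mu\geq\lambda$ is precisely what lets us absorb the lower bound into the upper one and recover a clean $\mu$ on both sides.
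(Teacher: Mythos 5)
Your proof is correct and follows essentially the same route as the paper: reduce to a single asset swap, compare both $\val_{[a,b]}(Y+G)$ and $\val_{[a,b]}(Y+L)$ to the common baseline $\val_{[a,b]}(Y)$ using nonnegativity of assets for one direction, and bound the increase from adding one asset by that asset's mean (your pointwise Lipschitz bound is just the paper's tail-integral computation $\int_a^b \Pr[Y>w-x]\,dw \leq \int_0^\infty \Pr[Y>y]\,dy = \mu$ stated before taking expectations). The telescoping reduction and the use of $\lambda \leq \mu$ to symmetrize the bound match the paper as well.
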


To prove this, it is helpful to use the following expression for the value of the $[a,b]$ tranche.

\begin{observation}
\label{obs:value}
Let $W$ denote a random variable representing the payoff of a portfolio underlying a CDO. 
Then the value of the corresponding $[a,b]$ tranche is
\[ \int_a^b \Pr[W > w]dw.
\]
\end{observation}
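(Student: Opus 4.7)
The plan is to establish the pointwise identity $\val_{[a,b]}(x) = \int_a^b \mathbf{1}[x > t]\,dt$ for every $x \geq 0$, and then take expectations on both sides, interchanging the expectation and the integral.

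First I would verify the pointwise identity by splitting into three cases based on the position of $x$ relative to the attachment points $a$ and $b$. If $x \leq a$, then $\min(x,b) - \min(x,a) = x - x = 0$, and the indicator $\mathbf{1}[x > t]$ vanishes for every $t \in [a,b]$ since $t \geq a \geq x$, so both sides are $0$. If $a < x \leq b$, then $\min(x,b) - \min(x,a) = x - a$, while $\int_a^b \mathbf{1}[x > t]\,dt = \int_a^x 1\,dt = x - a$. Finally, if $x > b$, then $\min(x,b) - \min(x,a) = b - a$, and $\mathbf{1}[x > t] = 1$ for all $t \in [a,b]$, giving $\int_a^b 1\,dt = b - a$. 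In every case the two sides agree, so the identity $\val_{[a,b]}(x) = \int_a^b \mathbf{1}[x > t]\,dt$ holds for all nonnegative payoffs.

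Next I would take expectations over $X$ and swap expectation with the integral using Fubini's theorem (applicable since the integrand is nonnegative and bounded by $1$, and the integration range $[a,b]$ is finite). This yields
\[
\mathbb{E}[\val_{[a,b]}(X)] \;=\; \mathbb{E}\!\left[\int_a^b \mathbf{1}[X > t]\,dt\right] \;=\; \int_a^b \mathbb{E}\bigl[\mathbf{1}[X > t]\bigr]\,dt \;=\; \int_a^b \Pr[X > t]\,dt,
\]
which is exactly the claimed expression for the value of the $[a,b]$ tranche.

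There is no real obstacle in this proof; it is essentially the standard ``layer cake'' representation specialized to the tranche function. The only item that needs mild care is justifying the Fubini interchange, but this is trivial since $\mathbf{1}[X > t]$ is a bounded, measurable, nonnegative function on the product of $[a,b]$ with the probability space. One could alternatively write $\val_{[a,b]}(x) = (x-a)^+ - (x-b)^+$ and use the well-known identity $\mathbb{E}[(X-c)^+] = \int_c^\infty \Pr[X > t]\,dt$ for each of the two pieces, then subtract to obtain the same integral over $[a,b]$; either route completes the proof.
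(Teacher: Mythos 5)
Your proof is correct. The paper states this as an Observation with no accompanying proof, implicitly relying on exactly the standard layer-cake (tail-integral) identity you spell out: the pointwise fact $\min(x,b)-\min(x,a)=\int_a^b \mathbf{1}[x>t]\,dt$ followed by Tonelli's theorem to exchange expectation and integral, so your argument is precisely the justification the paper takes for granted.
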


\begin{proof}[Proof of Lemma~\ref{valuechange}]
It suffices to prove the lemma for $i=1$.
Let $W$ be the payoff of a portfolio on $\derivdeg - 1$ assets, $g$ of which are good.
Let $X$ be the payoff of a good asset, and $Y$ the payoff of a lemon.  We wish to show that
\[
\left | \int_a^b \Pr[W+X > u]du - \int_a^b \Pr[W+Y > u]du \right | \leq \mu.
\]

To this end, first note that
\[
\int_a^b \Pr[W+X > u]du \geq \int_a^b \Pr[W > u]du,
\]
and similarly for $W+Y$, since both $X$ and $Y$ are nonnegative.
It therefore suffices to show that
\[
\int_a^b \Pr[W+X > u]du - \int_a^b \Pr[W > u]du \leq \mu,
\]
and hence the corresponding inequality for $W+Y$.

Condition on $W=w$; we show this inequality for any $w$.  Observe that
\[ \Pr[w+X > u] - \Pr[w > u] = \left\{ \begin{array}{ll}
								\Pr[X > u-w]	& \mbox{if $w \leq u$}\\
								0			& \mbox{otherwise}
								
								\end{array}
							\right. \]
							
Letting $x=u-w$ gives:
\[
\int_a^b \left ( \Pr[w+X > u] - \Pr[w > u] \right ) du \leq \int_0^\infty \Pr[X>x] dx = \mu.
\]
This completes the proof.
\end{proof}

\begin{proof}[Proof of Theorem~\ref{unique-cdo-gen}]
We now proceed as in our earlier proof, replacing Corollary~\ref{valuesumcor} with Lemma~\ref{valuechange}.
Everything else goes through as before.
\end{proof}

\section{Extensions}
\label{sec:extend}

\subsection{Partial Stochastic Domination}

In the previous section 
we removed the stochastic domination assumption and obtained Theorem~\ref{unique-cdo-gen}, which has a weaker conclusion than Theorem~\ref{unique-cdo}.
Now we introduce a notion of partial stochastic domination that allows us to interpolate somewhat between Theorems~\ref{unique-cdo-gen} and~\ref{unique-cdo}.
For the following definition, when we write a distribution $D=pD_1 + (1-p)D_2$, we mean that we can sample from $D$ by sampling from $D_1$ with probability $p$ and sampling from $D_2$ otherwise.

\begin{definition}
We say a distribution $D$ \emph{$p$-dominates $D'$ with dominated mean $\lambda_1$} if $D=pD_1 + (1-p)D_2$, $D'=pD'_1 + (1-p)D'_2$, $D_1$ first-order stochastically dominates $D_1'$, and
the mean of $D'_1$ is $\lambda_1$.
\end{definition}

For example, a symmetric good asset distribution $1/2$-dominates a symmetric lemon distribution, as $D_1$ for the good asset occurs when the value is above its median, and $D_2$ for the lemon occurs when the value is below its median.  However, the dominated mean will be low with this pairing.
If the lemon distribution is a shift and spread of the good asset distribution, with the spread being a factor of $s>1$,
then we might expect the good asset distribution to $p$-dominate the lemon distribution for $p \approx 1/s$,
and with dominated mean not far from $\lambda$ if $s \approx 1$.

We can show the following.

\begin{theorem}
\label{interpolate}
Suppose the distribution of good assets $p$-dominates the distribution of lemons with dominated mean~$\lambda_1$.
A CDO built from a $(\leftdeg,\rightdeg)$-biregular $(\lemonsize,\leftdeg - \Delta)$-unique neighbor expander is
pseudorandom for $\lemonsize$ lemons.
For the tranche $[a,b]$, the error is at most $\Delta \lemonsize \sigma/(\derivsize(b-a))$,
where $\sigma=\mu - p \lambda_1$.
\end{theorem}

Note that Theorems~\ref{unique-cdo} and~\ref{unique-cdo-gen} follow from this, by taking $p=1$ (in which case $\lambda_1=\lambda$) and $p=0$, respectively.

To motivate the proof of this theorem, observe
that the only place we used stochastic domination to show Theorem~\ref{unique-cdo} was to prove
Lemma~\ref{nondecreasing}, and hence deduce Corollary~\ref{valuesumcor}.
The proof of Theorem~\ref{interpolate} follows from the following interpolation between Corollary~\ref{valuesumcor} and Lemma~\ref{valuechange}.

\begin{lemma}
\label{valuechange-interpolate}
For any $g$, $i$, and tranche $[a,b]$, we have
$|\valu_{[a,b]}(g+i) - \valu_{[a,b]}(g)| \leq i \sigma$.
\end{lemma}

We sketch a proof of this lemma for $i=1$, which suffices.  Write the good asset distribution $D^g = pD^g_1 + (1-p)D^g_2$ and the lemon distribution $D^\ell = pD^\ell_1 + (1-p)D^\ell_2$,
with $D^g_1$ dominating $D^\ell_1$, and the means of $D^g_1$, $D^g_2$, $D^\ell_1$, and $D^\ell_2$ being $\mu_1$, $\mu_2$, $\lambda_1$, and $\lambda_2$, respectively.  Couple the random variables corresponding to a lemon and a good asset in the natural way, so that with probability $p$ the sample is chosen from $D^g_1$ and $D^\ell_1$.  The expected increase in value by switching from a lemon to a good asset is the probability that the domination condition holds, i.e., $p$, times the expected increase when domination holds, which is bounded in Corollary~\ref{valuesumcor} and proportional to $\mu_1-\lambda_1$, plus the probability that domination isn't guaranteed to hold, i.e., $1-p$, times the expected increase in the general case, which is bounded in Lemma~\ref{valuechange} and proportional to $\mu_2$.  Hence the total increase is proportional to $p(\mu_1-\lambda_1) + (1-p)\mu_2 = \mu - p\lambda_1 = \sigma$.
The theorem then follows as before.

\subsection{Allowing Different Distributions}

It's also natural to ask whether we can analyze a model where the good assets (respectively, lemons) don't all have the same distribution.  For this discussion, let's fix the global random variable $Z$.  What if all the distributions can be different, but each good asset first-order stochastically dominates each lemon?
This case can arise in practice because banks may use different assets that have different risk profiles.
Unlike for lemons, the risk profiles of the good assets are public information.

Unfortunately, stochastic domination is not a strong enough assumption.  For example, suppose there are few lemons, each with value 0, but the good assets are divided into two categories, excellent and lemon-like.  The lemon-like good assets are only slightly better than lemons, and the excellent assets are genuine good assets.  This satisfies the domination conditions.  However, the error should be proportional to the total number of lemons plus lemon-like good assets, not the number of lemons as we would have liked.

This counterexample demonstrates that the good assets should all have similar distributions, as should all the lemons.  It is not enough that the expected values are similar, because an asset that always takes value 1/2 behaves very differently in a CDO from an asset that takes value 1 with probability 1/2 and 0 otherwise.  A natural measure of closeness in our setting is the Levy distance.

\begin{definition}
Let $X$ and $Y$ be real-valued random variables with cumulative distribution function $F$ and~$G$.
The Levy distance between them is the smallest $\alpha$ such that for all $x$,
\[ F(x-\alpha )-\alpha \leq G(x)\leq F(x+\alpha )+\alpha.
\]
\end{definition}

Consider the model where any two lemons (respectively, good assets) have Levy distance at most $\alpha$.
If two assets have Levy distance $\alpha$, then a CDO containing one can change in value by at most $2\alpha$ if it is swapped out for the other.
Therefore, the total value of a tranche can change by at most $2mr \alpha$, so the error can increase by at most $\frac{2mr \alpha}{m(b-a)} =2r\alpha/(b-a)$.

\section{Conclusions and Future Work}
\label{sec:conclude}

We used expander graphs to construct pseudorandom derivative families, whose values cannot be manipulated by strategic lemon placement.
We analyzed our construction under a fairly general model with two asset types, where good assets first-order stochastically dominating lemons.

In the previous section, we discussed extensions, but perhaps more is possible.
Can we more precisely analyze the case if the lemon distribution is a shift and spread of the good asset distribution?
Is there a different useful condition besides (partial) stochastic domination?
What else can we say when the distributions of good assets (or lemons) differ?

Another direction is to allow more general dependencies than that given by the factor framework.

Other future work can involve the graph-theoretic aspects.
Unique-neighbor expansion is fairly strong; can we obtain good bounds under a weaker condition?
For example, perhaps a seller wishes to place each asset in more derivatives, so the graph is denser.
Then it's natural to require the graph to be a ``randomness extractor."  Can we obtain bounds for such graphs?

What if the graph is not left-regular?  This corresponds to some assets appearing in fewer CDOs than others, which is natural if they are ``smaller."

Finally, we could try to analyze a more complex derivative that banks created and sold:  CDOs squared.
A CDO squared is a CDO whose underlying assets are themselves tranches of CDOs.

\section*{Acknowledgements}

I thank Sanjeev Arora, 
Rafa Mendoza-Arriaga,
Abhishek Bhowmick,
Michael Kearns,
Kumar Muthuraman,
Ryan O'Donnell, and
Stathis Tompaidis
for useful comments and discussions.
I'm also grateful to the anonymous referees for many valuable comments.

This research was supported by NSF Grants CCF-0634811, CCF-0916160, CCF-1526952, and CCF-1705028
and a Simons Investigator Award (\#409864).

\bibliographystyle{alpha}
\bibliography{/Users/diz/Documents/bibs/refs}

\end{document}